\pgfplotsset{compat=newest}
\newcommand\footnoteref[1]{\protected@xdef\@thefnmark{\ref{#1}}\@footnotemark}
\newtheorem{theorem}{Theorem}
\newtheorem{lemma}[theorem]{Lemma}
\newtheorem{remark}[theorem]{Remark}
\newtheorem{definition}{Definition}
\newenvironment{mymatrix}{\begin{bmatrix}} {\end{bmatrix} }
\def\ve#1{{\mathchoice{\mbox{\boldmath$\displaystyle #1$}}%
              {\mbox{\boldmath$\textstyle #1$}}%
              {\mbox{\boldmath$\scriptstyle #1$}}%
              {\mbox{\boldmath$\scriptscriptstyle #1$}}}}
\newcommand{\MoormatExplicit}[3]{
	\begin{mymatrix}
		#1_{1} & #1_{2} & \dots& #1_{#3}\\
		#1_{1}^{[1]} & #1_{2}^{[1]} & \dots& #1_{#3}^{[1]}\\
		\vdots &\vdots&\ddots& \vdots\\
		#1_{1}^{[#2-1]} & #1_{2}^{[#2-1]} & \dots& #1_{#3}^{[#2-1]}\\
	\end{mymatrix}}
\newcommand{\Fq}{\ensuremath{\mathbb{F}_q}}
\newcommand{\Fqm}{\ensuremath{\mathbb{F}_{q^m}}}
\DeclareMathOperator{\extsmallfield}{ext}
\DeclareMathOperator{\rank}{rk}
\newcommand{\Code}{\mathcal{C}}
\newcommand{\IC}{\mathcal{IC}}
\newcommand{\rspace}[2]{\mathcal{R}_{#2} \begin{pmatrix}#1\end{pmatrix}}
\newcommand{\rref}[1]{\text{ref}(#1) }
\renewcommand{\ker}{\mathcal{K} }
\newcommand{\mycode}[1]{\ensuremath{\mathcal{#1}}}
\newcommand{\Gabcode}[2]{\ensuremath{\mathcal{G}(#1,#2)}}
\newcommand{\IntCode}[1]{\ensuremath{\mathcal{IC}[#1]}}
\newcommand{\0}{\ve{0}}
\renewcommand{\S}{\ve{S}}
\renewcommand{\H}{\ve{H}}
\renewcommand{\a}{\ve{a}}
\renewcommand{\b}{\ve{b}}
\newcommand{\e}{\ve{e}}
\renewcommand{\v}{\ve{v}}
\newcommand{\g}{\ve{g}}
\renewcommand{\c}{\ve{c}}
\renewcommand{\e}{\ve{e}}
\newcommand{\G}{\ve{G}}
\renewcommand{\P}{\ve{P}}
\newcommand{\R}{\ve{R}}
\newcommand{\M}{\ve{M}}
\newcommand{\E}{\ve{E}}
\newcommand{\C}{\ve{C}}
\newcommand{\Hsub}{\ve{H}_\mathrm{sub}}
\newcommand{\Hsubi}{\ve{H}_{\mathrm{sub},i}}
\newcommand{\HsubEins}{\ve{H}_{\mathrm{sub},1}}
\newcommand{\HsubNKT}{\ve{H}_{\mathrm{sub},n-k-t}}
\newcommand{\A}{\ve{A}}
\newcommand{\B}{\ve{B}}
\newcommand{\Bp}{\ve{B}^{\prime }}
\newcommand{\Bppt}{\ve{B}^{\prime \prime \top}}
\newcommand{\Bpp}{\ve{B}^{\prime \prime}}
\newcommand{\Hp}{\ve{H}^{\prime}}
\newcommand{\hp}{\ve{h}^{\prime}}
\newcommand{\h}{\ve{h}}
\newcommand{\Htil}{\tilde{\ve{\H}}}
\newcommand{\I}{\ve{I}}
\newcommand{\J}{\ve{J}}
\newcommand{\D}{\ve{D}}
\newcommand{\Dp}{\ve{D}^{\prime}}
\newcommand{\dR}{\mathrm{d}_\mathrm{R}}
\newcommand{\Matmap}{\varphi}
\newcommand{\NMtnm}{\mathrm{NM}_{t,n,m}}
\newcommand{\suppR}{\mathrm{supp}_\mathrm{R}}
\newcommand{\suppH}{\mathrm{supp}_\mathrm{H}}
\newcommand{\ranksupp}[1]{\suppR(#1)}
\newcommand{\Fqmell}{\mathbb{F}_{q^{m \ell}}}
\newcommand{\tmax}{t_\mathrm{max}}
\newcommand{\Z}{\ve{Z}}
\newcommand{\F}{\mathbb{F}}
\newcommand{\removelatexerror}{\let\@latex@error\@gobble}
\newcommand{\printalgoIEEE}[1]
{{\centering
\scalebox{0.97}{
\removelatexerror
\begin{tabular}{p{\columnwidth}}
\begin{algorithm}[H]
 \begin{small}
 #1
 \end{small}
\end{algorithm}
\end{tabular}
}
}
}
\begin{document}

\title{Decoding High-Order \\ Interleaved Rank-Metric Codes}
\author{Sven Puchinger,~\IEEEmembership{Member,~IEEE,} \IEEEauthorblockN{Julian Renner, Antonia Wachter-Zeh,~\IEEEmembership{Member,~IEEE}}
\IEEEauthorblockA{
\thanks{
  S.~Puchinger, J.~Renner, and A.~Wachter-Zeh are with the Institute for Communications Engineering, Technical University of Munich (TUM), Germany, e-mail: \{sven.puchinger, julian.renner, antonia.wachter-zeh\}@tum.de.}
\thanks{This project has received funding from the German Israeli Project Cooperation (DIP) grant no.~KR3517/9-1 and from the European Research Council (ERC) under the European Union’s Horizon 2020 research and innovation programme (grant agreement No~801434).
}
}}

\maketitle

\begin{abstract}
This paper presents an algorithm for decoding homogeneous interleaved codes of high interleaving order in the rank metric.
The new decoder is an adaption of the Hamming-metric decoder by Metzner and Kapturowski (1990) and \emph{guarantees} to correct all rank errors of weight up to $d-2$ whose rank over the large base field of the code equals the number of errors, where $d$ is the minimum rank distance of the underlying code.
In contrast to previously-known decoding algorithms, the new decoder works for \emph{any} rank-metric code, not only Gabidulin codes. It is purely based on linear-algebraic computations, and has an explicit and easy-to-handle success condition.
Furthermore, a lower bound on the decoding success probability for random errors of a given weight is derived. The relation of the new algorithm to existing interleaved decoders in the special case of Gabidulin codes is given.
\end{abstract}

\begin{IEEEkeywords}
Decoding, Gabidulin Codes, Interleaved Codes, Metzner--Kapturowski Algorithm, Rank-Metric Codes
\end{IEEEkeywords}

\section{Introduction}

Interleaved codes are direct sums of codes of the same length, where the summands are termed \emph{constituent codes} and their number is called \emph{interleaving order}.
By assuming that errors occur in certain patterns, it is possible to correct more errors than half the minimum distance.

In the Hamming metric, interleaved codes have been considered for replicated file disagreement location \cite{metzner1990general}, correcting burst errors in data-storage applications \cite{krachkovsky1997decoding}, suitable outer codes in concatenated codes \cite{metzner1990general,krachkovsky1998decoding,haslach1999decoding,justesen2004decoding,schmidt2005interleaved,schmidt2009collaborative}, an ALOHA-like random-access scheme \cite{haslach1999decoding}, decoding non-interleaved codes beyond half-the-minimum distance by power decoding \cite{schmidt2010syndrome,kampf2014bounds,rosenkilde2018power,puchinger2019improved}, and recently for code-based cryptography \cite{elleuch2018interleaved,holzbaur2019decoding}.
In all these works, the errors are assumed to be matrices with only a few non-zero columns which are added to an interleaved codeword matrix where each row is a codeword of the constituent code.
This means that the errors affect the same positions in the constituent codes (burst errors) and the number of errors is given by the number of non-zero columns of the error matrix.

There exist several decoding algorithms for interleaved Reed--Solomon codes that, for interleaving order at least two, decode beyond half the minimum distance and also beyond the Johnson radius \cite{krachkovsky1997decoding,bleichenbacher2003decoding,coppersmith2003reconstructing,parvaresh2004multivariate,brown2004probabilistic,parvaresh2007algebraic,schmidt2007enhancing,schmidt2009collaborative,cohn2013approximate,nielsen2013generalised,wachterzeh2014decoding,puchinger2017irs,yu2018simultaneous}. Beyond the unique decoding radius, decoding sometimes fails, but with a small probability
which can be bounded from above and roughly estimated by $1/q$ where $q$ is the field size of the constituent code.

Already in 1990, Metzner and Kapturowski \cite{metzner1990general} introduced a decoding algorithm for interleaved codes in the Hamming metric, where the constituent codes are the same (\emph{homogeneous interleaved codes}) and have minimum distance $d$.
The decoder can correct up to $d-2$ errors, given that the interleaving order is high enough (i.e., at least the number of errors) and that the rank of the error matrix equals the number of errors.
We want to stress that this decoding algorithm works for interleaved codes with arbitrary constituent codes, is purely based on linear-algebraic operations (i.e., row operations on matrices), and has complexity quadratic in the code length and linear in the interleaving order.
This is remarkable since the code can correct most error patterns up to almost the minimum distance of the code without assuming any side information about the error (e.g., as for erasures, where the error positions are known).
The result by Metzner and Kapturowski was later independently rediscovered in \cite{haslach1999decoding} and generalized to dependent errors by the same authors in \cite{haslach2000efficient}.

Rank-metric codes are sets of vectors over an extension field, whose elements can be interpreted as matrices over a subfield and whose distance is given by the rank of their difference.
The codes were independently introduced in \cite{Delsarte_1978,Gabidulin_TheoryOfCodes_1985,Roth_RankCodes_1991}, together with their most famous code class, Gabidulin codes, which can be seen as the rank-metric analogs of Reed--Solomon codes.

Interleaved codes in the rank metric were introduced in \cite{loidreau2006decoding} and \cite{silva2008rank}, and have found applications in code-based cryptography \cite{faure2006new,Gaborit-KeyRecoveryFaureLoidreau,wachter2018repairing,renner2018rank}, network coding \cite{silva2008rank,sidorenko2010decoding}, and construction and decoding of space-time codes \cite{gabidulin2000space,lusina2002spacetime,liu2002rank,lusina2003,robert2015new,puchinger2016space}.

Similar to the Hamming metric, in the rank metric, the errors occur as additive matrices, but their structure is different: the row spaces 
of the constituent errors
are contained in a relatively small joint row space whose dimension is the number of errors.
This (joint) row space is usually seen as the rank-metric analog to the \emph{support} of an error
\cite{Gabidulin_TheoryOfCodes_1985,Roth_RankCodes_1991,chabaud1996rsd,ourivski2002new,gaborit2016decoding,aragon:ISIT18}.

There are several algorithms for decoding interleaved Gabidulin codes \cite{loidreau2006decoding,sidorenko2010decoding,wachter2014list}, as well as efficient variants thereof \cite{sidorenko2011skew,sidorenko2014fast,puchinger2017row,puchinger2017alekhnovich}, which are able to correct most (but not all) error patterns up to a certain number of errors that is beyond half the minimum distance for interleaving orders at least two.

In this paper, we adapt Metzner and Kapturowski's algorithm to the rank metric.
As a result, we obtain an algorithm that can correct up to $d-2$ rank errors with a homogeneous interleaved code over an arbitrary constituent code of minimum rank distance $d$. The success conditions are the same as in Hamming metric: the interleaving order must be large enough and the rank of the error matrix (over the extension field) must be equal to the number of errors.
The new algorithm also works for arbitrary linear rank-metric codes, including, but not limited to, Gabidulin \cite{Delsarte_1978,Gabidulin_TheoryOfCodes_1985,Roth_RankCodes_1991}, generalized Gabidulin \cite{roth1996tensor,kshevetskiy2005new,augot2013rank}, low-rank-parity-check (LRPC) \cite{gaborit2013low}, Loidreau's Gabidulin-like \cite{loidreau2016evolution}, or twisted Gabidulin codes \cite{sheekey2016new,otal2016explicit} and their generalizations \cite{lunardon2015generalized,puchinger2017further}.
The algorithm is again purely based on linear-algebraic operations and has a complexity of
\begin{align*}
O^\sim\!\left( \max\{n^2\ell,n^3\} m \right)
\end{align*}
operations over the subfield $\Fq$, where $O^\sim$ neglects log factors, $n$ is the code length, $\ell$ is the interleaving degree, and $m$ is the extension degree of the extension field $\Fqm$ over the subfield~$\Fq$. We prove that for random errors of a given weight and growing interleaving order, the success probability gets arbitrarily close to $1$. Further, we derive sufficient conditions on the error for which the decoder is able to correct more than $d-2$ errors and present an adaption to certain heterogeneous codes. In addition, we show that by viewing a homogeneous interleaved code as a linear code over a large extension field, one obtains a (non-interleaved) linear rank-metric code and the proposed decoder corrects almost any error of rank weight up to $d-2$ in this code. Finally, we prove that in the case of Gabidulin codes, the new decoder succeeds under the same conditions as the known decoding algorithms.

The structure of this paper is as follows. In Section~\ref{sec:preliminaries}, we introduce notation, give definition,s and recall the Hamming-metric algorithm by Metzner and Kapturowski. In Section~\ref{sec:new_algorithm}, we propose the new algorithm, prove its correctness, analyze its complexity, compare it to the algorithm in Hamming metric, and give an example. In Section~\ref{sec:further_results}, we show further results including the success probability of the new decoder for random errors, sufficient conditions to successfully decode more than $d-2$ errors, an adaptation to heterogeneous interleaved codes, and relations to existing decoders. Conclusions and open problems are given in Section~\ref{sec:conclusion}.

\section{Preliminaries}\label{sec:preliminaries}
\subsection{Notation}
Let $q$ be a power of a prime and let
$\Fq$ denote the finite field of order $q$ and $\Fqm$ its extension field of order $q^m$.
Any element of $\Fq$ can be seen as an element of $\Fqm$ and $\Fqm$ is an $m$-dimensional vector space over $\Fq$.

We use $\Fq^{m \times n}$ to denote the set of all $m\times n$ matrices over $\Fq$ and $\Fqm^n =\Fqm^{1 \times n}$ for the set of all row vectors of length $n$ over $\Fqm$. Rows and columns of $m\times n$-matrices are indexed by $1,\dots, m$ and $1,\dots, n$, where $A_{i,j}$ is the element in the $i$-th row and $j$-th column of the matrix $\A$. The transposition of a matrix is indicated by superscript $\top$ and $\rref{\A}$ refers to a reduced row echelon form of $\A$. Further, we define the set of integers $[a:b] := \{i: a \leq i \leq b\}$ and the submatrix notation
\begin{equation*}
  \A_{[a:b],[c:d]} :=
  \begin{mymatrix}
    A_{a,c}& \hdots& A_{a,d}\\
    \vdots & \ddots& \vdots \\
    A_{b,c}& \hdots& A_{b,d}\\
  \end{mymatrix}.
\end{equation*}

Let $ \ve{\gamma} = \begin{mymatrix}\gamma_1,\gamma_2,\dots,\gamma_{m}\end{mymatrix}$ be an ordered basis of $\Fqm$ over $\Fq$. By utilizing the vector space isomorphism $\Fqm \cong \Fq^m$, we can relate each vector $\a \in \Fqm^n$ to a matrix $\A \in \Fq^{m \times n}$ according to
\begin{align*}
  \extsmallfield:\Fqm^{n} &\rightarrow \Fq^{m \times n}\label{eq:mapping_smallfield},\\
  ~\a = \begin{mymatrix}a_1,\hdots,a_n\end{mymatrix} &\mapsto \A = 
                                                       \begin{mymatrix}
                                                         A_{1,1}& \hdots& A_{1,n}\\
                                                         \vdots & \ddots& \vdots \\
                                                         A_{m,1}& \hdots& A_{m,n}\\
                                                         \end{mymatrix},
\end{align*}
where $a_j = \sum_{i=1}^{m} A_{i,j} \gamma_i$ for all $j \in [1,n]$.
Further, we extend the definition of $\extsmallfield$ to matrices by extending each row and then vertically concatenating the resulting matrices. A property that will be used in the paper is that if $\B \in \Fq^{t \times n}$ is a matrix from the small field $\Fq$ and $\v \in \Fqm^{n}$, then
$\extsmallfield(\v \B^{\top}) = \extsmallfield(\v) \B^{\top}$.

Let $\mathcal{V}$ be a vector space. By $\mathcal{V}^{\bot}$, we indicate the dual space of $\mathcal{V}$, i.e.,
\begin{equation*}
\mathcal{V}^{\bot} := \{ \v': \v'\v^{\top} = 0, \forall \v \in \mathcal{V} \}.
  \end{equation*}

In the following, let $\mathbb{F} \in \{\Fq,\Fqm\}$. 
We deliberately allow $\mathbb{F}$ to be the extension field $\Fqm$ or a subfield thereof.
Since then always $\mathbb{F} \subseteq \Fqm$, operations between elements of $\F$ and $\Fqm$ are well-defined.
This will be used several times throughout the paper.
The ($\F$-)span of vectors $\v_1,\dots,\v_l \in \Fqm^n$ is defined by the ($\F$-)vector space
\begin{equation*}
\langle\v_1,\hdots,\v_l\rangle_{\mathbb{F}} = \bigg\{\sum_{i=1}^{l}a_i\v_i \, : \, \ a_i \in \mathbb{F} \bigg\}.  
\end{equation*}
The ($\F$-)row space of a matrix $\A \in \Fqm^{m\times n}$ is the ($\F$-)vector space spanned by its rows,
\begin{equation*}
\rspace{\A}{\mathbb{F}} = \big \langle \begin{mymatrix} A_{1,1}, \hdots, A_{1,n} \end{mymatrix},\hdots,\begin{mymatrix} A_{m,1}, \hdots, A_{m,n} \end{mymatrix} \big \rangle_{\mathbb{F}} .
\end{equation*}
The (right) ($\F$-)kernel of a matrix $\A \in \Fqm^{m \times n}$ is the ($\F$-)vector space given by
\begin{equation*}
\ker_{\F}(\A) := \{ \v \in \F^{n}: \A\v^{\top} = \0 \}.
\end{equation*}
Note that in case of $\F = \Fq$, we can write and compute the $\Fq$-kernel as $\ker_{\Fq}(\A) := \{ \v \in \Fq^{n}:   \extsmallfield(\A)\v^{\top} = \0 \}$.
We define the $\Fqm$-rank of a matrix $\A \in \Fqm^{m\times n}$ to be
\begin{equation*}
\rank_{\Fqm}(\A) := \dim_{\Fqm} \left( \rspace{\A}{\Fqm} \right),
\end{equation*}
and its $\Fq$-rank as
\begin{equation*}
\rank_{\Fq}(\A) := \dim_{\Fq} \left( \rspace{ \extsmallfield(\A)}{\Fq} \right).
\end{equation*}
Note that the latter rank equals the $\Fq$-dimension of the $\Fq$-column span of the matrix $\A$ (and, obviously, of its extension $\extsmallfield(\A)$).
For the same matrix $\A$, the $\Fq$- and $\Fqm$-rank can be different.
In general, we have $\rank_{\Fqm}(\A) \leq \rank_{\Fq} (\A)$, where equality holds if and only if the reduced row echelon form of $\A$ has only entries in $\Fq$.

Further, throughout this paper, we use $[i]:=q^i$ for any integer $i \geq 0$.

\subsection{Rank-Metric Codes}

The rank norm $\rank_q(\a)$ of a vector $\a$ is the rank of the matrix representation $\A \in \Fq^{m \times n}$ over $\mathbb{F}_{q}$, i.e.,
\begin{equation*}
\rank_q(\a) := \rank_q(\A).
\end{equation*}
The rank distance between $\a$ and $\b$ (with $\A := \extsmallfield(\a)$ and $\B := \extsmallfield(\b)$) is defined by
\begin{equation*}
\dR(\a,\b):= \rank_q(\a-\b) = \rank_q(\A-\B).
\end{equation*}
A linear $[n,k,d]$ code $\mycode{C}$ over $\Fqm$ is a $k$-dimensional subspace of $\Fqm^n$ and minimum rank distance $d$, where
\begin{equation*}
d := \min_{\substack{\a,\b \in \mycode{C} \\ \a \neq \b }} \lbrace \rank_q(\a -\b) \rbrace  =  \min_{\a \in \mycode{C} \setminus \{0\} }\lbrace \rank_q(\a) \rbrace. 
\end{equation*}

Gabidulin codes are the first-known and most-studied class of rank-metric codes. They are defined as follows.

\begin{definition}[Gabidulin code, {\cite{Delsarte_1978,Gabidulin_TheoryOfCodes_1985,Roth_RankCodes_1991}}]
	A Gabidulin code $\Gabcode{n}{k}$ over $\Fqm$ of length $n \leq m$ 
	and dimension $k$ is defined by its $k \times n$ generator matrix
	\begin{equation*}
	\G = \MoormatExplicit{g}{k}{n},
	\end{equation*}
	where $\g=[g_1,g_2, \dots, g_{n}] \in \Fqm^n$ and $\rank_q(\g) = n$. 
\end{definition}
Gabidulin codes are MRD codes, i.e., $d=n-k+1$, and can decode uniquely and efficiently any error $\e \in \Fqm^n$ of rank weight $\rank_{\Fq}(\e) \leq \lfloor\frac{d-1}{2}\rfloor$.

Besides Gabidulin codes and variants therof based on different automorphisms \cite{roth1996tensor,kshevetskiy2005new,augot2013rank}, there are several other ($\Fqm$-)linear rank metric code constructions, for instance: low-rank-parity-check (LRPC) \cite{gaborit2013low}, which have applications in code-based cryptography, Loidreau's code class that modifies Gabidulin codes for cryptographic purposes \cite{loidreau2016evolution}, and twisted Gabidulin codes \cite{sheekey2016new,otal2016explicit}, which were the first general family of non-Gabidulin MRD codes. There are also generalizations of twisted Gabidulin codes \cite{lunardon2015generalized,puchinger2017further} and other example codes for some explicit parameters \cite{horlemann2015new}.

\subsection{Interleaved Codes}

In this paper, we propose a new decoding algorithm for homogeneous interleaved codes, which are defined as follows.

\begin{definition}\label{def:interleaved_codes}
Let $\Code[n,k,d]$ be a linear (rank- or Hamming-metric) code over $\Fqm$ and $\ell \in \mathbb{Z}_{>0}$ be a positive integer.
The corresponding ($\ell$-)interleaved code is defined by
\begin{equation*}
\IC[\ell; n,k,d] := \left\{ \C = \begin{bmatrix}
\c_1 \\ \c_2 \\ \vdots \\ \c_\ell
\end{bmatrix} \, : \, \c_i \in \Code \right\} \subseteq \Fqm^{\ell \times n} \ .
\end{equation*}
We call $\Code$ the constituent code and $\ell$ the interleaving order.
\end{definition}

Note that any codeword $\C \in \Fqm^{\ell \times n}$ of an interleaved code can be written as $\C = \M \G$, where $\G \in \Fqm^{k \times n}$ is a generator matrix of the constituent code $\C$ and $\M \in \Fqm^{\ell \times k}$ is a message.
This also directly implies that $\H \C^\top = \0 \in \Fqm^{n-k \times \ell}$ for any codeword $\C \in \IC$, where $\H$ is a parity-check matrix of the constituent code.

\subsection{Error Model and Support}

As an error model, we consider additive error matrices $\E \in \Fqm^{\ell \times n}$ of specific structure, depending on the chosen metric.
The goal of decoding is to recover a codeword $\C \in \IC$ from a received word
\begin{equation*}
\R = \C + \E \in \Fqm^{\ell \times n}.
\end{equation*}
We outline the error models for both Hamming and rank metric since we will often discuss analogies of the Hamming and rank case throughout the paper.
Furthermore, we recall the important notion of \emph{support} of an error.

\subsubsection{Hamming Metric}

In the Hamming metric, an error (of a non-interleaved code) of weight $t$ is a vector having exactly $t$ non-zero entries. It is natural to define the \emph{support} of the error as the set of indices of these non-zero positions, and many algebraic decoding algorithms aim at recovering the support of an error since it is easy to retrieve the error values afterwards.

For interleaved codes in the Hamming metric, errors of weight $t$ are considered to be matrices $\E \in \Fqm^{\ell \times n}$ that have exactly $t$ non-zero columns.
This means that errors occur at the same positions in the constituent codewords.
A natural generalization of the support of the error $\E$ is thus the set of indices of non-zero columns, i.e.,
\begin{equation*}
\suppH(\E) := \left\{ j \, : \, \text{$j$-th column of $\E$ is non-zero} \right\}.
\end{equation*}
The \emph{number of errors}, or \emph{Hamming weight} of the error $\E$, is then defined as the cardinality of the support.
Since $\E$ has only $t$ non-zero columns, we can decompose it into two matrices
\begin{equation}
\E = \A \B, \label{eq:Hamming_decomposition}
\end{equation}
where $\A \in \Fqm^{\ell \times t}$ consists of the non-zero columns of $\E$ and the rows of $\B \in \Fqm^{t \times n}$ are the corresponding $t$ identity vectors of the error positions.

\subsubsection{Rank Metric}

In the rank metric, an error of weight $t$, in the non-interleaved case, is a vector $\e \in \Fqm^n$, whose $\Fq$-rank (i.e., the $\Fq$-rank of its matrix representation $\extsmallfield(\e)$) is $t$.
It has been noted in the literature several times that the row (or column) space of the matrix representation $\extsmallfield(\e)$ of the error shares many important properties with the support notion in the Hamming metric, see, e.g., \cite{Gabidulin_TheoryOfCodes_1985,Roth_RankCodes_1991,chabaud1996rsd,ourivski2002new,gaborit2016decoding,aragon:ISIT18}.
We therefore define the \emph{(rank) support} of an error to be the row space of its matrix representation.
Then, the rank weight equals the dimension of its support.

In the interleaved case, an error of weight $t$ is a matrix $\E \in \Fqm^{\ell \times n}$ with $\Fq$-rank $t$, cf.~\cite{loidreau2006decoding,wachter2014list}.\footnote{The paper \cite{sidorenko2010decoding} considers a different error model, but the algorithm in \cite{sidorenko2010decoding} can be reformulated to work with the model considered here, cf.~\cite[Section~4.1]{wachter2013decoding}.}
Note that the matrix entries are in general over the large field $\Fqm$, but the rank is taken over $\Fq$.
Analog to the case of a single vector, we define the rank support of a matrix $\E \in \Fqm^{\ell \times n}$ to be the row space of the extended matrix $\extsmallfield(\E) \in \Fq^{\ell m \times n}$, i.e.,
\begin{equation*}
\suppR(\E) := \rspace{\extsmallfield(\E)}{\Fq}.
\end{equation*}
Thus, the number of errors, or rank weight of the error, equals the $\Fq$-dimension of the error's support.
Similar to \eqref{eq:Hamming_decomposition}, we can decompose the error matrix as follows.
\begin{lemma}[{see, e.g., \cite[Theorem~1]{matsaglia1974}}]\label{lem:E=AB}
Let $\E \in \Fqm^{\ell \times n}$ be an error matrix with $\rank_{\Fq}(\E) = t$. Then, it can be decomposed into
\begin{equation*}
\E = \A \B,
\end{equation*}
where $\A\in\Fqm^{\ell \times t}$ and $\B \in \Fq^{t \times n}$ both have full $\Fq$-rank $t$, cf. right part of Figure~\ref{fig:support_illustration}.
The matrix $\A$ and $\B$ are unique up to elementary $\Fq$-column and $\Fq$-row operations, respectively, and the rows of $\B$ are a basis of the error support $\suppR(\E)$.
\end{lemma}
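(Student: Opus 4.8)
The statement is Lemma (E=AB): given $\E \in \Fqm^{\ell \times n}$ with $\rank_{\Fq}(\E) = t$, write $\E = \A\B$ with $\A \in \Fqm^{\ell \times t}$, $\B \in \Fq^{t \times n}$, both full rank, $\B$'s rows a basis of the support.

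The support is $\rspace{\extsmallfield(\E)}{\Fq}$ — the $\Fq$-row span of the extended $\ell m \times n$ matrix. This has dimension $t$.

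Key insight: $\B$ must be over $\Fq$ (the small field) but $\A$ over $\Fqm$. The rows of $\E$ (over $\Fqm$) each lie in the $\Fq$-span of the support. So each row $\e_i$ of $\E$ can be written as $\Fq$-linear combination of a basis of the support. But the coefficients... wait, the coefficients would be in $\Fq$, giving $\A \in \Fq$? No — let me think.

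Let me reconsider. The rows of $\E$ are vectors in $\Fqm^n$. The support is an $\Fq$-subspace of $\Fq^n$ (after extension) of dimension $t$. Pick basis $\b_1, \ldots, \b_t \in \Fq^n$ of the support. Each row $\e_i \in \Fqm^n$ — how does it relate?

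Actually the support is the row space of $\extsmallfield(\E)$, which is a subspace of $\Fq^n$. A basis $\b_1,\ldots,\b_t$ lives in $\Fq^n$. Now each row of $\extsmallfield(\E)$ is an $\Fq$-combination of the $\b_j$. The matrix $\extsmallfield(\E)$ has $\ell m$ rows. Group them: rows $(i-1)m+1,\ldots,im$ correspond to the extension of $\e_i$. These $m$ rows reassemble via the basis $\ve{\gamma}$ into $\e_i = \sum_k A_{k,i}$-type...

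Let me use the cited result structure. The standard rank factorization: any matrix $M$ of rank $t$ over a field factors as (column-space basis)(row-space basis). Here the subtlety is the two-field structure.

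**Plan of proof.**

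First, reduce to the extended matrix. Set $\tilde{\E} = \extsmallfield(\E) \in \Fq^{\ell m \times n}$; by hypothesis $\rank_{\Fq}(\tilde{\E}) = t$. Apply the standard rank factorization over $\Fq$: $\tilde{\E} = \tilde{\A}\B$ with $\tilde{\A} \in \Fq^{\ell m \times t}$, $\B \in \Fq^{t \times n}$ both full rank, and the rows of $\B$ a basis of $\rspace{\tilde{\E}}{\Fq} = \suppR(\E)$. This is exactly the cited matsaglia result.

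Second, un-extend the left factor. The claim is that $\tilde{\A}$ can be repackaged into $\A \in \Fqm^{\ell \times t}$ with $\extsmallfield(\A) = \tilde{\A}$, and that then $\E = \A\B$. This uses the displayed property from the preliminaries: for $\B \in \Fq^{t\times n}$ and $\v \in \Fqm^n$, $\extsmallfield(\v\B^\top) = \extsmallfield(\v)\B^\top$. I define $\A$ by letting row $i$ of $\A$ be the element of $\Fqm^t$ whose $\extsmallfield$-image is the block $\tilde{\A}_{[(i-1)m+1:im],[1:t]}$. Then applying the property blockwise and using $\extsmallfield(\E) = \tilde{\A}\B = \extsmallfield(\A)\B$, injectivity of $\extsmallfield$ gives $\E = \A\B$. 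The $\Fq$-rank of $\A$ equals $\rank_{\Fq}(\tilde{\A}) = t$, so $\A$ has full $\Fq$-rank.

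Third, uniqueness. If $\E = \A\B = \A'\B'$ are two such factorizations, both $\B, \B'$ have row space equal to $\suppR(\E)$ (each row of $\A\B$ lies in the $\Fq$-span of $\B$'s rows, and the support is determined by $\E$), so $\B' = \U\B$ for an invertible $\U \in \Fq^{t\times t}$, i.e. $\B$ is unique up to $\Fq$-row operations; correspondingly $\A' = \A\U^{-1}$, unique up to $\Fq$-column operations.

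**Main obstacle.**

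The only genuinely non-routine point is the un-extension step: confirming that the block structure of $\tilde{\A}$ is \emph{consistent} with coming from a single $\A \in \Fqm^{\ell\times t}$ via $\extsmallfield$, and that the identity $\tilde{\A}\B = \extsmallfield(\A)\B$ survives the repackaging. Since $\extsmallfield$ is an $\Fq$-linear bijection onto $\Fq^{m\times n}$ applied row-block-wise, \emph{any} $\tilde{\A} \in \Fq^{\ell m\times t}$ pulls back to a unique $\A \in \Fqm^{\ell\times t}$, so consistency is automatic; the compatibility identity is precisely the stated property of $\extsmallfield$ with right-multiplication by $\Fq$-matrices. Thus the obstacle is really just bookkeeping of indices, and the cited factorization does the heavy lifting. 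I would spend most of the care making the block-indexing in the un-extension explicit.
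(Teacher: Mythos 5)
The paper does not prove this lemma itself---it is quoted from the literature (\cite[Theorem~1]{matsaglia1974})---so there is no in-paper argument to compare against. Your proof is correct and is essentially the standard one: apply the rank factorization over $\Fq$ to $\extsmallfield(\E)$, then pull the left factor back through $\extsmallfield$ (which is a bijection $\Fqm^{\ell\times t}\to\Fq^{\ell m\times t}$, so the repackaging is indeed automatic) using the compatibility $\extsmallfield(\v\B^\top)=\extsmallfield(\v)\B^\top$ stated in the preliminaries; the uniqueness argument via a right inverse of $\B$ over $\Fq$ is also sound. The only point worth making fully explicit is the one you flag in your uniqueness step: for \emph{any} factorization $\E=\A\B$ with $\B\in\Fq^{t\times n}$ of rank $t$, one has $\rspace{\extsmallfield(\E)}{\Fq}\subseteq\rspace{\B}{\Fq}$ with both sides of dimension $t$, hence equality---this is what forces every admissible $\B$ to be a basis of the support.
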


For the two metrics, we will illustrate analogies of the notions of support and the decompositions, \eqref{eq:Hamming_decomposition} and Lemma~\ref{lem:E=AB}, in Figure~\ref{fig:support_illustration} (see Section~\ref{ssec:analogy}, a few pages ahead).

\subsection{Metzner--Kapturowski Algorithm for Decoding High-Order Interleaved Codes in the Hamming Metric}

In~\cite{metzner1990general}, Metzner and Kapturowski proposed a Hamming-metric decoding algorithm for interleaved codes with high interleaving order, i.e., $\ell\geq t_{\text{H}}$, where $t_\text{H}$ is the number of errors.
The algorithm is generic as it works with any code of minimum Hamming distance $d_\text{H}$.
It was shown that the proposed algorithm always retrieves the transmitted codeword if $t_{\text{H}}\leq d_{\text{H}}-2$ and if the non-zero columns of the error matrix are linearly independent, i.e., $\rank_{\Fqm}(\E) = t_\text{H}$. The algorithm is given in Algorithm~\ref{alg:MK}.
Furthermore, an illustration of the algorithm can be found in the left part of Figure~\ref{fig:support_illustration} (see Section~\ref{ssec:analogy}, some pages ahead),
which compares the classical Hamming-metric Metzner--Kapturowski algorithm with the new algorithm for rank-metric codes.

 \printalgoIEEE{
 \DontPrintSemicolon
 \KwIn{Parity-check matrix $\H$, received word $\R$}
 \KwOut{Transmitted codeword $\C$}

$\S \gets \H \R^{\top} \in \Fqm^{(n-k)\times\ell}$.

Determine $\P\in \Fqm^{(n-k)\times(n-k)}$ s.t. $\P \S = \rref{\S}$.

$\Hsub \gets (\P \H)_{[t_{\text{H}}+1:n-k],[1:n]} \in \Fqm^{(n-k-t_{\text{H}}) \times n}$.

Determine $\B\in\Fq^{t_{\text{H}} \times n}$ s.t. the columns of $\B$, which correspond to the zero-columns of $\Hsub$, form an identity matrix and the remaining columns of $\B$ are zero.

Determine $\A\in\Fqm^{\ell \times t_{\text{H}}} $ s.t. $(\H\B^{\top})\A^{\top} = \S$.

$\C \gets \R - \A\B \in \Fqm^{\ell \times n}$.

\Return{$\C$} 
 \caption{Metzner--Kapturowski Algorithm \cite{metzner1990general}}
 \label{alg:MK}
}

We observe that the algorithm first determines the error positions, i.e., $\mathrm{supp}_{\text{H}}(\E)$, by bringing the syndrome matrix $\S = \H \R^{\top}$ in reduced row echelon form and applying the same transformation to $\H$. The matrix $\Hsub$, which consists of the last $n-k-t_{\text{H}}$ rows of the transformed matrix $\P\H$, has then zero columns exactly at the error positions. After the error positions are determined, erasure decoding is performed.

\section{Decoding High-Order Interleaved Codes in the Rank Metric}\label{sec:new_algorithm}

In this section, we propose a new decoding algorithm for interleaved codes in the rank metric, which is an adaption of Metzner and Kapturowski's decoder to the rank metric and works under similar conditions for up to $t \leq d-2$ errors:
\begin{enumerate}
\item \textbf{High-order condition:} The interleaving order is at least the number of errors, i.e., $\ell \geq t$.
\item \textbf{Full-rank condition:} The error matrix has full $\Fqm$-rank, i.e., $\rank_{\Fqm}(\E) = \rank_{\Fq}(\E) = t$.
\end{enumerate}
In fact, the full-rank condition implies the high-order condition since the $\Fqm$-rank of a matrix $\E \in \Fqm^{\ell \times n}$ is at most $\ell$.
We will nevertheless mention both conditions for didactic reasons.

Throughout this section, we fix a rank-metric code $\Code$ over a field $\Fqm$ with parameters $[n,k,d]$ and a parity-check matrix $\H$ of $\Code$. We want to retrieve a codeword $\C$ of the homogeneous $\ell$-interleaved code $\IntCode{\ell;n,k,d}$, given the received word
\begin{align*}
\R = \C + \E \in \Fqm^{\ell \times n},
\end{align*}
where $\E$ is an error matrix of rank weight $\rank_{\Fq}(\E) = t$.

\subsection{The Error Support}

Similar to the Metzner--Kapturowski algorithm for the Hamming metric, our new algorithm is centered around retrieving the rank support of the error matrix $\E$ from the syndrome matrix $\S = \H \R^\top$.
As soon as $\suppR(\E)$ is known, we can recover the error $\E$ using Lemma~\ref{lem:get_E_from_B} below.
The method is a form of erasure correction, i.e., the rank-metric analog of computing the error values given the error positions in the Hamming metric.
For Gabidulin codes, this fact was already used in \cite{Gabidulin_TheoryOfCodes_1985,Roth_RankCodes_1991} and can be efficiently implemented by error-erasure decoders, cf.~\cite{silva2008rank,gabidulin2008error} or their fast variants \cite{silva2009fast,wachter2013fast,puchinger2018fast}.
In the general case, it has been an important ingredient of generic rank-syndrome decoding algorithms \cite{chabaud1996rsd,ourivski2002new,gaborit2016decoding,aragon:ISIT18}, which are mostly based on guessing the error support and then computing the error.
Since computing the error from its support is an important step of the new algorithm, we present the formal statement and proof, together with the resulting complexity, below for completeness.
\begin{lemma}[{see, e.g., \cite{Gabidulin_TheoryOfCodes_1985,Roth_RankCodes_1991,chabaud1996rsd,ourivski2002new,gaborit2016decoding,aragon:ISIT18}}]\label{lem:get_E_from_B}
Let $t<d$, $\B \in \Fq^{t \times n}$ be a basis of the rank support $\ranksupp{\E}$ of an error matrix $\E \in \Fqm^{\ell \times n}$, and $\S \in \Fqm^{n-k \times \ell} = \H \E^\top$ be the corresponding syndrome matrix.
Then, the error is given by $\E= \A \B$, where $\A \in \Fqm^{\ell \times t}$ is the unique solution of the linear system of equations
\begin{equation}
\S = (\H \B^\top) \A^\top. \label{eq:system_A_from_S_H_B}
\end{equation}
Thus, $\E$ can be computed in $O(\max\{\ell n^2,n^3\})$ operations in $\Fqm$ from $\ranksupp{\E}$ and $\S$.
\end{lemma}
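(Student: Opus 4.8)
The plan is to verify three things in turn: that the system \eqref{eq:system_A_from_S_H_B} is consistent (a solution $\A$ exists), that the solution is unique, and that recovering $\A$ recovers $\E$ exactly via $\E = \A\B$. The complexity claim then follows by counting operations in the linear-algebra steps.

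First I would establish existence. By Lemma~\ref{lem:E=AB}, since $\rank_{\Fq}(\E)=t$ and the rows of $\B$ form a basis of $\suppR(\E)$, there is some $\A_0 \in \Fqm^{\ell \times t}$ with $\E = \A_0 \B$. Taking the syndrome then gives
\begin{equation*}
\S = \H \E^\top = \H (\A_0 \B)^\top = \H \B^\top \A_0^\top = (\H\B^\top)\A_0^\top,
\end{equation*}
so $\A_0^\top$ solves \eqref{eq:system_A_from_S_H_B} and the system is consistent.

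The main obstacle is uniqueness, which is where the hypothesis $t<d$ enters. The solution $\A^\top$ of \eqref{eq:system_A_from_S_H_B} is unique if and only if the coefficient matrix $\H\B^\top \in \Fqm^{(n-k)\times t}$ has full $\Fqm$-column rank $t$, equivalently its right $\Fqm$-kernel is trivial. Suppose $\vec{x} \in \Fqm^t$ satisfies $(\H\B^\top)\vec{x}^\top = \0$. Consider the vector $\vec{e} := \vec{x}\B \in \Fqm^n$. Its syndrome is $\H\vec{e}^\top = \H\B^\top\vec{x}^\top = \0$, so $\vec{e}$ lies in the code $\Code$. On the other hand, the rows of $\B$ are $\Fq$-linearly independent and lie in $\Fq^{n}$, so $\rank_{\Fq}(\vec{e}) = \rank_{\Fq}(\vec{x}\B) \le t < d$; here I would use that $\B$ has entries in the subfield $\Fq$, so the $\Fq$-rank of $\vec{x}\B$ is at most the number of rows $t$. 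Since $d$ is the minimum rank distance of $\Code$, the only codeword of rank weight strictly less than $d$ is the zero codeword, forcing $\vec{e}=\0$. As $\B$ has full $\Fq$-row rank $t$, its rows are $\Fqm$-linearly independent as well (a subfield-independent set stays independent over the extension), hence $\vec{x}\B = \0$ implies $\vec{x}=\0$. This proves the kernel is trivial and the solution $\A$ is unique. Consequently $\A = \A_0$ (up to nothing, as the solution is unique), so $\E = \A\B$ is recovered exactly.

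Finally, for the complexity, the dominant costs are forming $\H\B^\top$, which is a product of an $(n-k)\times n$ matrix with an $n\times t$ matrix, and solving the overdetermined but consistent system \eqref{eq:system_A_from_S_H_B} of $(n-k)$ equations in $t \le n$ unknowns for each of the $\ell$ right-hand-side columns, e.g.\ by Gaussian elimination on $\H\B^\top$ once and back-substitution for the $\ell$ columns of $\S$. Using $t \le n$, $n-k \le n$, and $\ell$ columns, these steps cost $O(\max\{\ell n^2, n^3\})$ operations in $\Fqm$, as claimed; the product $\A\B$ to form $\E$ is within the same bound.
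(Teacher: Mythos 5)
Your proposal is correct and follows the same overall structure as the paper's proof: existence of a solution via the decomposition $\E=\A\B$ from Lemma~\ref{lem:E=AB}, uniqueness via the fact that $\H\B^\top$ has full $\Fqm$-column rank $t$, and the same operation count. The one genuine difference is in how that rank fact is established. The paper simply cites \cite[Theorem~1]{Gabidulin_TheoryOfCodes_1985}, which states that multiplying a parity-check matrix of a code with minimum rank distance $d$ by an $\Fq$-matrix of $\Fq$-rank $t<d$ yields a matrix of $\Fqm$-rank $t$. You instead prove the needed special case from scratch: any $\vec{x}$ in the right kernel of $\H\B^\top$ gives a codeword $\vec{x}\B$ of rank weight at most $t<d$ (since $\B$ is over $\Fq$, so the rank support of $\vec{x}\B$ lies in the $t$-dimensional row space of $\B$), which must be zero, and then $\vec{x}=\0$ because an $\Fq$-independent set of rows of a matrix over $\Fq$ remains $\Fqm$-independent. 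This argument is sound and makes the lemma self-contained at the cost of a few extra lines; the paper's citation is shorter and reuses a result it also needs elsewhere (in Theorem~\ref{thm:kernelB}). Both complexity analyses agree.
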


\begin{proof}
Since $\B$ is a basis of $\ranksupp{\E}$, there must be a matrix $\A \in \Fqm^{\ell \times t}$ such that $\E = \A \B$. Since $\S = \H \E^\top$, $\A$ must fulfill $\S = (\H \B^\top) \A^\top$.
On the other hand, there can only be one matrix $\A$ fulfilling \eqref{eq:system_A_from_S_H_B} since, by \cite[Theorem~1]{Gabidulin_TheoryOfCodes_1985}, the matrix $\H \B^\top$ has $\Fqm$-rank $t$ due to $t<d$ and $\rank_{\Fq}(\B)=t$.
The multiplications $\H \B^\top$ and $\A \B$ cost $O(n^3)$ field operations, respectively, and solving the system \eqref{eq:system_A_from_S_H_B} requires $O(\max\{\ell n^2,n^3\})$ operations in $\Fqm$, which implies the complexity statement.
\end{proof}

\subsection{How to Determine the Error Support}

Our new decoding algorithm is based on retrieving the support of the error.
The error itself can then be computed using the method implied by Lemma~\ref{lem:get_E_from_B}.
In the following, we show how to obtain the error support from the syndrome and parity-check matrix.

Similar to Metzner and Kapturowski, we compute the syndrome matrix $\S$ as the product of the parity-check matrix $\H$ and the transposed received word $\R^\top = \C^\top + \E^\top$.
Due to the properties of the parity-check matrix, we obtain
\begin{align*}
\S = \H \E^\top.
\end{align*}
Then, we transform $\S$ into row echelon form.
Since $\S$ has $\Fqm$-rank at most $t$, which is smaller than its number of rows $n-k \geq d-1$, the resulting matrix has zero rows.
We apply the same row operations used to obtain the echelon form of $\S$ to the parity-check matrix and consider the matrix $\Hsub$, which consists of the rows of the resulting matrix corresponding to the zero rows of the echelon form of $\S$.
This process is illustrated in Figure~\ref{fig:PS_Hsub_illustration}.
The following sequence of statements derives the main statement of this section, Theorem~\ref{thm:kernelB}: the error support can be efficiently computed from $\Hsub$.

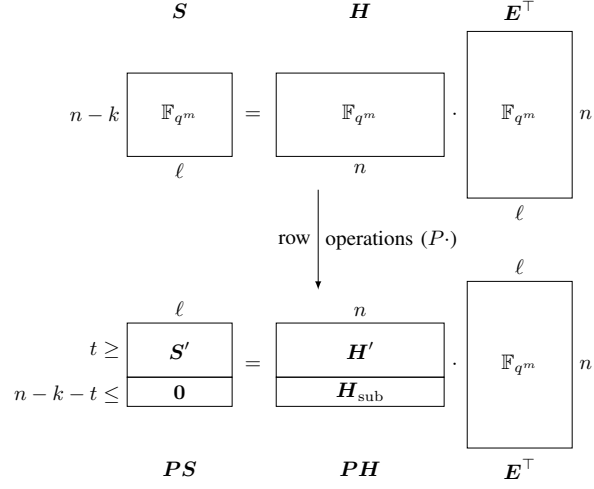
\begin{figure}
\begin{center}
  \resizebox{0.45\textwidth}{!}{
\begin{tikzpicture}[scale=0.7]
\def\nklen{2}
\def\nktlen{0.7}
\def\elllen{2.5}
\def\tlen{1.5}
\def\nlen{4}
\def\ylabeldist{2.5}
\def\xHdist{4.3}
\def\xEdist{8.1}
\def\myeps{0.5}
\def\ydist{6}

% S Matrix
\draw (-0.5*\elllen,-0.5*\nklen) rectangle node {$\Fqm$} (0.5*\elllen,0.5*\nklen);
\node[left] at (-0.5*\elllen,0) {$n-k$};
\node[below] at (0,-0.5*\nklen) {$\ell$};
\node at (0,\ylabeldist) {$\S$};

\node at (0.5*\elllen+\myeps,0) {$=$};

% H Matrix
\draw (\xHdist-0.5*\nlen,-0.5*\nklen) rectangle node {$\Fqm$} (\xHdist+0.5*\nlen,0.5*\nklen);
\node[below] at (\xHdist+0,-0.5*\nklen) {$n$};
\node at (\xHdist+0,\ylabeldist) {$\H$};

\node at (\xHdist+0.5*\nlen+0.5*\myeps,0) {$\cdot$};

% E Matrix
\draw (\xEdist-0.5*\elllen,-0.5*\nlen) rectangle node {$\Fqm$} (\xEdist+0.5*\elllen,0.5*\nlen);
\node[right] at (\xEdist+0.5*\elllen,0) {$n$};
\node[below] at (\xEdist+0,-0.5*\nlen) {$\ell$};
\node at (\xEdist+0,\ylabeldist) {$\E^\top$};

% Down Arrow
\draw[->,>=latex] (\xHdist-0.25*\nlen, -0.3*\ydist) to node [left] {row}  node [right] {operations ($P \cdot$)} (\xHdist-0.25*\nlen, -0.7*\ydist);

% PS Matrix
\draw (-0.5*\elllen,-\ydist-0.5*\nklen+\nktlen) rectangle node {$\S'$} (0.5*\elllen,-\ydist+0.5*\nklen);
\draw (-0.5*\elllen,-\ydist-0.5*\nklen) rectangle node {$\0$} (0.5*\elllen,-\ydist-0.5*\nklen+\nktlen);
\node[left] at (-0.5*\elllen,-\ydist+0.5*\nktlen) {$t\geq$};
\node[left] at (-0.5*\elllen,-\ydist-\nktlen) {$n-k-t\leq$};
\node[above] at (0,-\ydist+0.5*\nklen) {$\ell$};
\node at (0,-\ydist-\ylabeldist) {$\P\S$};

\node at (0.5*\elllen+\myeps,-\ydist+0) {$=$};

% H Matrix
\draw (\xHdist-0.5*\nlen,-\ydist-0.5*\nklen+\nktlen) rectangle node {$\H'$} (\xHdist+0.5*\nlen,-\ydist+0.5*\nklen);
\draw (\xHdist-0.5*\nlen,-\ydist-0.5*\nklen) rectangle node {$\Hsub$} (\xHdist+0.5*\nlen,-\ydist-0.5*\nklen+\nktlen);

\node[above] at (\xHdist+0,-\ydist+0.5*\nklen) {$n$};
\node at (\xHdist+0,-\ydist-\ylabeldist) {$\P\H$};
\node at (\xHdist+0.5*\nlen+0.5*\myeps,-\ydist+0) {$\cdot$};

% E Matrix
\draw (\xEdist-0.5*\elllen,-\ydist-0.5*\nlen) rectangle node {$\Fqm$} (\xEdist+0.5*\elllen,-\ydist+0.5*\nlen);
\node[right] at (\xEdist+0.5*\elllen,-\ydist+0) {$n$};
\node[above] at (\xEdist+0,-\ydist+0.5*\nlen) {$\ell$};
\node at (\xEdist+0,-\ydist-\ylabeldist) {$\E^\top$};

\end{tikzpicture}
}
\end{center}
\caption{Illustration of Lemma~\ref{lem:Hsub_basis_general}. 
}
\label{fig:PS_Hsub_illustration}
\end{figure}

\begin{lemma}\label{lem:Hsub_basis_general}
Let $\S = \H \E^{\top} \in \Fqm^{(n-k) \times \ell}$ be the syndrome of an error $\E \in \Fqm^{\ell \times n}$ of rank weight $\rank_{\Fq}(\E)=t<n-k$ and $\P\in \Fqm^{(n-k)\times(n-k)}$ be a matrix of rank $\rank_{\Fqm}(\P)=n-k$ such that $\P\S$ is in row-echelon form.
Then, at least $n-k-t$ rows of $\P\S$ are zero.
Let $\Hsub$ be the rows of $\P\H$ corresponding to the zero rows in $\P \S$.
Then, $\Hsub$ is a basis of
\begin{align*}
\ker_{\Fqm}(\E) \cap \Code^\bot.
\end{align*}
\end{lemma}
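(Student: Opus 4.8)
The plan is to prove the two assertions separately, with a dimension count forming the heart of the argument. First I would bound the $\Fqm$-rank of the syndrome. Since $\S = \H\E^{\top}$ and the rows of $\H$ span $\Code^{\bot}$, the $\Fqm$-row space of $\S$ equals $\{\v\E^{\top} : \v \in \Code^{\bot}\}$, which is contained in the $\Fqm$-column space of $\E^{\top}$; hence $\rank_{\Fqm}(\S) \le \rank_{\Fqm}(\E) \le \rank_{\Fq}(\E) = t$, using the general inequality $\rank_{\Fqm}(\cdot) \le \rank_{\Fq}(\cdot)$ recalled in the preliminaries. Because $\P$ is invertible, $\P\S$ has the same $\Fqm$-rank, and a matrix in row-echelon form with $n-k$ rows and rank at most $t$ has at least $n-k-t$ zero rows; this settles the first assertion.

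For the basis claim I would first check the easy inclusion. Each row of $\Hsub$ is a row of $\P\H$, hence an $\Fqm$-linear combination of rows of $\H$, so it lies in $\Code^{\bot}$. Moreover, writing $\P\S = (\P\H)\E^{\top}$, the rows of $\P\H$ selected for $\Hsub$ are exactly those whose product with $\E^{\top}$ is a zero row of $\P\S$; hence each such row $\h'$ satisfies $\h'\E^{\top} = \0$, i.e. $\E\h'^{\top} = \0$, so $\h' \in \ker_{\Fqm}(\E)$. Thus the $\Fqm$-row space of $\Hsub$ is contained in $\ker_{\Fqm}(\E) \cap \Code^{\bot}$.

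The crux is to upgrade this inclusion to an equality, which I would do by a dimension count. Consider the $\Fqm$-linear map $\psi:\Code^{\bot} \to \Fqm^{\ell}$, $\v \mapsto \v\E^{\top}$. Its image is precisely the $\Fqm$-row space of $\S$, of dimension $\rank_{\Fqm}(\S)$, and its kernel is $\Code^{\bot} \cap \ker_{\Fqm}(\E)$. Rank-nullity then gives $\dim_{\Fqm}(\Code^{\bot} \cap \ker_{\Fqm}(\E)) = (n-k) - \rank_{\Fqm}(\S)$. On the other hand, a row-echelon form has exactly $\rank_{\Fqm}(\S)$ nonzero rows, so $\Hsub$ has exactly $(n-k) - \rank_{\Fqm}(\S)$ rows, matching the dimension of the target space. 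Finally, since $\H$ has full row rank $n-k$ and $\P$ is invertible, $\P\H$ has $n-k$ $\Fqm$-linearly independent rows, so any subset of them, in particular the rows of $\Hsub$, is linearly independent. A linearly independent family lying inside a space and having cardinality equal to that space's dimension is a basis, which proves the claim.

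The main obstacle is getting the dimension bookkeeping exactly right: one must simultaneously identify the image of $\psi$ with the row space of $\S$ (so its dimension is the number of nonzero echelon rows) and its kernel with the desired intersection, so that rank-nullity forces the count of zero rows to coincide with $\dim_{\Fqm}(\ker_{\Fqm}(\E)\cap\Code^{\bot})$. I note that only $\rank_{\Fq}(\E)=t$ is used here; the stronger full-rank condition $\rank_{\Fqm}(\E)=t$ is not needed for this lemma and will only enter in the subsequent results that identify this intersection with the error support.
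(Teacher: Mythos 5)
Your proof is correct and follows essentially the same route as the paper: the rank bound on $\S$ yields the zero rows, the easy inclusion is identical, and your rank--nullity count for $\v \mapsto \v\E^\top$ restricted to $\Code^\bot$ is a repackaging of the paper's element-chasing step, which writes a vector of the intersection as $[\v_1,\v_2]\,\P\H$ and uses that the nonzero echelon rows $\S'$ of $\P\S$ are linearly independent to force $\v_1=\0$. Your version has the minor merit of making explicit the linear independence of the rows of $\Hsub$ (needed for the word ``basis''), which the paper leaves implicit, and you correctly observe that only $\rank_{\Fq}(\E)=t$ is used here.
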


\begin{proof}
Since $\E$ has $\Fq$-rank $t$, its $\Fqm$-rank is at most $t$. Hence, the $\Fqm$-rank of $\S$ is at most $t$, and at least $n-k-t$ of the $n-k$ rows are zero in its echelon form $\P \S$.

The rows of $\P\H$ (and thus of $\Hsub$) are in the row space of $\H$, which is equal to $\Code^\bot$.
Furthermore, the rows of $\Hsub$ are in the kernel of $\E$ since $\Hsub \E^\top = \0$.
It is left to show that the rows span the entire intersection.
Write
\begin{align*}
\P \S =
\begin{bmatrix}
\S' \\ \0
\end{bmatrix}, \quad \P\H = \begin{bmatrix}
\H' \\ \Hsub
\end{bmatrix},
\end{align*}
where $\S' = \H' \E^\top$ has full rank and has as many rows as $\H'$.
Let $\h = [\v_1,\v_2] \begin{bmatrix}
\H' \\ \Hsub
\end{bmatrix}$ be a vector in the row space of $\P\H$ and in the kernel of $\E$.
Then, we can write
\begin{align*}
\0 = \h \E^\top = [\v_1,\v_2] \begin{bmatrix}
\H' \\ \Hsub
\end{bmatrix} \E^\top = \v_1 \H' \E^\top = \v_1 \S'
\end{align*}
due to $\Hsub \E^\top = \0$. This implies that $\v_1 = \0$ since the rows of $\S'$ are linearly independent.
Thus, $\h$ is in the row space of $\Hsub$.
\end{proof}

Lemma~\ref{lem:Hsub_basis_general} shows that the matrix $\Hsub$ is connected to the kernel of the error.
The next lemma proves that this kernel is closely connected to the rank support of the error if the $\Fqm$-rank of the error is $t$ (\textbf{full-rank condition}).
Note that we only required $\rank_{\Fq}(\E)=t$ in Lemma~\ref{lem:Hsub_basis_general}, which is a weaker condition.

\begin{lemma}\label{lem:Hsub_basis_full-rank}
Let $\E \in \Fqm^{\ell \times n}$ be an error of $\Fq$-rank $t$.
If $\ell \geq t$ (\textbf{high-order condition}) and $\rank_{\Fqm}(\E) = t$ (\textbf{full-rank condition}), then
\begin{equation*}
\ker_{\Fqm}(\E) = \ker_{\Fqm}(\B),
\end{equation*}
where $\B$ is any basis of the error support $\ranksupp{\E}$.
\end{lemma}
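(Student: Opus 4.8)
The plan is to use the factorization of $\E$ provided by Lemma~\ref{lem:E=AB} and to show that the \textbf{full-rank condition} makes the left factor injective. Concretely, I would first write $\E = \A \B$ with $\A \in \Fqm^{\ell \times t}$ and $\B \in \Fq^{t \times n}$, both of full $\Fq$-rank $t$, where by that lemma the rows of $\B$ form a basis of $\ranksupp{\E}$. Since any two bases of the support differ by an invertible $\Fq$-matrix acting on the left, the space $\ker_{\Fqm}(\B)$ depends only on the row space $\ranksupp{\E}$ and not on the particular basis, so it suffices to prove the claim for this specific $\B$.

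The easy inclusion $\ker_{\Fqm}(\B) \subseteq \ker_{\Fqm}(\E)$ is immediate: for $\v \in \Fqm^n$ with $\B\v^\top = \0$ we get $\E\v^\top = \A \B \v^\top = \0$, so $\v \in \ker_{\Fqm}(\E)$. This direction uses neither hypothesis.

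For the reverse inclusion I would exploit the full-rank condition. A matrix with entries in $\Fq$ has the same rank over $\Fq$ and over $\Fqm$, so $\rank_{\Fqm}(\B) = \rank_{\Fq}(\B) = t$. Combined with $\rank_{\Fqm}(\A) \leq t$, the rank inequality for products gives $t = \rank_{\Fqm}(\E) = \rank_{\Fqm}(\A\B) \leq \min\{\rank_{\Fqm}(\A), \rank_{\Fqm}(\B)\}$, which forces $\rank_{\Fqm}(\A) = t$. Hence $\A$ has full column rank over $\Fqm$ (in particular $\ell \geq t$, recovering the \textbf{high-order condition}), so its right kernel over $\Fqm$ is trivial. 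Now, if $\v \in \ker_{\Fqm}(\E)$, then $\A(\B\v^\top) = \0$, and triviality of the right kernel of $\A$ yields $\B\v^\top = \0$, i.e.\ $\v \in \ker_{\Fqm}(\B)$. The two inclusions together give the claimed equality.

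The main obstacle — really the only nontrivial point — is establishing $\rank_{\Fqm}(\A) = t$, i.e.\ that left-multiplication by $\A$ loses no information over $\Fqm$. This is precisely where the full-rank hypothesis enters and cannot be dropped: without it, $\A$ could have $\Fqm$-rank strictly below $t$ even though its $\Fq$-rank is $t$, and then $\ker_{\Fqm}(\E)$ would be strictly larger than $\ker_{\Fqm}(\B)$. Everything else is the standard submultiplicativity of rank together with the field-independence of the rank of an $\Fq$-matrix.
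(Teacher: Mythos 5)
Your proposal is correct and follows essentially the same route as the paper: factor $\E = \A\B$ via Lemma~\ref{lem:E=AB}, deduce from $\rank_{\Fqm}(\E)=t$ that $\A$ has full $\Fqm$-column-rank and hence trivial right kernel, and conclude $\ker_{\Fqm}(\A\B)=\ker_{\Fqm}(\B)$. Your explicit justification of $\rank_{\Fqm}(\A)=t$ via submultiplicativity of rank, and the remark that $\ker_{\Fqm}(\B)$ is independent of the chosen basis, only make explicit what the paper leaves implicit.
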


\begin{proof}
Let $\E$ have $\Fqm$-rank $t$ (recall that the high-order condition is necessary for this) and $\E=\A \B$ be a decomposition as in Lemma~\ref{lem:E=AB}.
Then, $\A \in \Fqm^{\ell \times t}$ must have full $\Fqm$-rank $t$ and $\ker_{\Fqm}(\A)=\0$. Thus, for all $\v \in \Fqm^{n} $, $(\A \B) \v^{\top} = \0$ if and only if $\B \v^{\top} = \0$, which is equivalent to
\begin{equation*}
\ker_{\Fqm}(\E) = \ker_{\Fqm}(\A \B) = \ker_{\Fqm}(\B). \qedhere
\end{equation*}
\end{proof}

In order to prove that the error support can be computed from $\Hsub$, we require the following property of $\extsmallfield(\Hsub)$.

\begin{lemma}
  Let $\Hsub\in\Fqm^{(n-k-t)\times n}$ and $\h \in \rspace{\Hsub}{\Fqm}$. Then each row of $\extsmallfield(\h)$ is in $\rspace{\extsmallfield(\Hsub)}{\Fq}$.
  \label{lemma:extg}
  \end{lemma}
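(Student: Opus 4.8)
The plan is to reduce the statement to how the map $\extsmallfield$ interacts with scalar multiplication by a single element of $\Fqm$. First I would use $\h \in \rspace{\Hsub}{\Fqm}$ to write $\h = \sum_{i=1}^{n-k-t} \lambda_i \g_i$ with $\lambda_i \in \Fqm$, where $\g_i$ denotes the $i$-th row of $\Hsub$. Since $\extsmallfield$ is additive and $\extsmallfield(\Hsub)$ is the vertical concatenation of the blocks $\extsmallfield(\g_i)$, the matrix $\extsmallfield(\h)$ equals $\sum_i \extsmallfield(\lambda_i \g_i)$, so each of its rows is the sum of the corresponding rows of the $\extsmallfield(\lambda_i \g_i)$. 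As $\rspace{\extsmallfield(\Hsub)}{\Fq}$ is an $\Fq$-subspace, it therefore suffices to prove the single-row statement: for every $\lambda \in \Fqm$ and every row $\g$ of $\Hsub$, each row of $\extsmallfield(\lambda \g)$ lies in $\rspace{\extsmallfield(\g)}{\Fq} \subseteq \rspace{\extsmallfield(\Hsub)}{\Fq}$; the full claim then follows by summing over $i$.

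The key step is to express scalar multiplication by $\lambda$ in $\gamma$-coordinates. Because $x \mapsto \lambda x$ is an $\Fq$-linear endomorphism of $\Fqm$, expanding $\lambda \gamma_s = \sum_{r=1}^{m} c_{r,s} \gamma_r$ with $c_{r,s} \in \Fq$ produces an $m \times m$ matrix $\M_\lambda = (c_{r,s})_{r,s}$ over $\Fq$. Writing the entries of $\g$ as $g_j = \sum_s \extsmallfield(\g)_{s,j}\, \gamma_s$, a short computation gives that the $\gamma_r$-coordinate of $\lambda g_j$ is $\sum_s c_{r,s}\, \extsmallfield(\g)_{s,j}$, that is, $\extsmallfield(\lambda \g) = \M_\lambda\, \extsmallfield(\g)$. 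Every row of $\M_\lambda\, \extsmallfield(\g)$ is by construction an $\Fq$-linear combination of the rows of $\extsmallfield(\g)$, which settles the single-row statement and hence, via the reduction above, the lemma.

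The point that requires care is that one must \emph{not} apply the identity $\extsmallfield(\v \B^\top) = \extsmallfield(\v)\,\B^\top$ directly here: that identity holds only for $\B$ with entries in $\Fq$, whereas the combining coefficients $\lambda_i$ lie in the large field $\Fqm$. The resolution, and what I expect to be the conceptual crux, is that multiplication by an element of $\Fqm$, while not expressible as an $\Fqm$-scalar acting entrywise on the $\Fq$-representation, is nevertheless $\Fq$-linear and hence represented over $\Fq$ by the matrix $\M_\lambda$; this is exactly what confines the rows of $\extsmallfield(\lambda \g)$ to the $\Fq$-row space of $\extsmallfield(\g)$. The surrounding coordinate bookkeeping is then routine.
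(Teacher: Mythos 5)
Your proof is correct and follows essentially the same route as the paper: both expand $\h$ as an $\Fqm$-linear combination of the rows of $\Hsub$ and then represent multiplication by each coefficient as a left-multiplication of the extended matrix by an $m\times m$ matrix over $\Fq$ (the paper cites this matrix-representation fact, while you derive it explicitly). Your explicit warning about why $\extsmallfield(\v\B^\top)=\extsmallfield(\v)\B^\top$ does not apply here is a useful clarification, but the underlying argument is the same.
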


  \begin{proof}
    Since $\h \in \rspace{\Hsub}{\Fqm}$, the vector $\h$ can be written as
    \begin{equation}
      \h = \sum_{i=1}^{n-k-t} a_i \Hsubi,
      \label{eq:hlincombi}
\end{equation}
where $a_1,\hdots,a_{n-k-t} \in \Fqm$ and $\Hsubi$ denotes the $i$-th row of $\Hsub$. Using the vector and matrix representation of finite field elements, equation~(\ref{eq:hlincombi}) can be mapped to
    \begin{equation*}
      \extsmallfield(\h) = \sum_{i=1}^{m} \M_{a_i} \extsmallfield(\Hsubi),
    \end{equation*}
    where $\M_{a_i} \in \Fq^{m\times m}$ is the matrix representation of $a_i$ over $\Fq$ for a given basis $\boldsymbol{\gamma}$, cf.~\cite{wardlaw_1994}. Since $\M_{a_i}$ for $i=1,\hdots,m$ is over $\Fq$, each row of $\extsmallfield(\h)$ is in $\rspace{\extsmallfield(\Hsub)}{\Fq}$.
    \end{proof}

By combining the three lemmas above, the following theorem shows that the rank support of the error can be computed from $\Hsub$ under the \textbf{high-order} and \textbf{full-rank condition}.
In the Hamming metric, Metzner and Kapturowksi required the number of errors to be $t\leq d-2$ since they used the fact that any $t+1 \leq d-1$ columns of the parity-check matrix are linearly independent.
Here, we obtain the same condition as we use the rank-metric analog of this statement, \cite[Theorem~1]{Gabidulin_TheoryOfCodes_1985}: If the parity-check matrix is multiplied from the right by any matrix over the small field $\Fq$ of rank $t+1 \leq d-1$, then the resulting matrix has rank $t+1$.

\begin{theorem}\label{thm:kernelB}
Let $\E \in \Fqm^{\ell \times n}$ be an error of $\Fq$-rank $t\leq d-2$.
If $\ell \geq t$ (\textbf{high-order condition}) and $\rank_{\Fqm}(\E) = t$ (\textbf{full-rank condition}), then
\begin{equation*}
\ranksupp{\E} = \ker_{\Fq}(\extsmallfield(\Hsub)),
\end{equation*}
where $\Hsub$ is defined as in Lemma~\ref{lem:Hsub_basis_general}.
\end{theorem}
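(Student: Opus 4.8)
The plan is to prove that the two $\Fq$-subspaces of $\Fq^n$ in the claim, $\ranksupp{\E}$ and $\ker_{\Fq}(\extsmallfield(\Hsub))$, coincide by first translating the right-hand side into a statement about the ordinary $\Fqm$-kernel of $\Hsub$ and then establishing two inclusions, the harder of which rests on the hypothesis $t\le d-2$. For the translation, I would record that for $\v\in\Fq^n$ the property $\extsmallfield(\v\B^\top)=\extsmallfield(\v)\B^\top$, applied row-wise with $\v$ read as a $1\times n$ matrix over $\Fq$, gives $\extsmallfield(\Hsub\v^\top)=\extsmallfield(\Hsub)\v^\top$; since $\extsmallfield$ is injective, this yields
\[
\ker_{\Fq}(\extsmallfield(\Hsub)) = \{\v \in \Fq^n : \Hsub \v^\top = \0\},
\]
so it suffices to compare $\ranksupp{\E}$ with the $\Fq$-rational vectors in the right $\Fqm$-kernel of $\Hsub$.

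For the inclusion $\ranksupp{\E}\subseteq\ker_{\Fq}(\extsmallfield(\Hsub))$ I would combine Lemma~\ref{lem:Hsub_basis_general} and Lemma~\ref{lem:Hsub_basis_full-rank}: the rows of $\Hsub$ lie in $\ker_{\Fqm}(\E)=\ker_{\Fqm}(\B)$, hence $\Hsub\B^\top=\0$. Since $\ranksupp{\E}=\rspace{\B}{\Fq}$, any $\v\in\ranksupp{\E}$ is an $\Fq$-combination of rows of $\B$, so $\Hsub\v^\top=\0$, and by the translation above $\v\in\ker_{\Fq}(\extsmallfield(\Hsub))$.

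For the reverse inclusion I would pass to $\Fqm$-duals. Setting $\Code_\B:=\rspace{\B}{\Fqm}$, the two lemmas give $\rspace{\Hsub}{\Fqm}=\ker_{\Fqm}(\B)\cap\Code^\bot=\Code_\B^\bot\cap\Code^\bot=(\Code_\B+\Code)^\bot$, so any $\v\in\Fq^n$ with $\Hsub\v^\top=\0$ lies in $(\Code_\B+\Code)^{\bot\bot}=\Code_\B+\Code$ and can be written $\v=\b+\c$ with $\b\in\Code_\B$, $\c\in\Code$. Then $\c=\v-\b$ obeys $\rank_q(\c)\le\rank_q(\v)+\rank_q(\b)\le 1+t\le d-1<d$, because a vector of $\Fq^n$ has $\Fq$-rank at most $1$ and every element of $\Code_\B$ has $\Fq$-rank at most $t$ (each row of its $\extsmallfield$-image lies in $\rspace{\B}{\Fq}$, exactly the argument of Lemma~\ref{lemma:extg}). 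A codeword of $\Code$ of rank weight below $d$ must vanish, so $\c=\0$ and $\v=\b\in\Code_\B\cap\Fq^n=\rspace{\B}{\Fq}=\ranksupp{\E}$; the last equality holds since $\B$ has full $\Fq$-rank $t$, forcing any $\Fq$-rational $\Fqm$-combination of its rows to have $\Fq$-coefficients.

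I expect the rank inequality $\rank_q(\c)\le 1+t\le d-1$ to be the main obstacle: the whole argument collapses as soon as one knows the codeword component $\c$ is forced to be zero, and this is exactly where the hypothesis $t\le d-2$ is consumed. This step is the rank-metric analog of the Metzner--Kapturowski fact that any $t+1\le d-1$ columns of a parity-check matrix are linearly independent, and in the rank metric it is underpinned by \cite[Theorem~1]{Gabidulin_TheoryOfCodes_1985}, equivalently by the defining minimum-distance property of $\Code$.
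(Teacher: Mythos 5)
Your proof is correct, and it takes a genuinely different route from the paper's. Both arguments start from the same place — Lemmas~\ref{lem:Hsub_basis_general} and \ref{lem:Hsub_basis_full-rank} giving $\rspace{\Hsub}{\Fqm}=\ker_{\Fqm}(\B)\cap\Code^\bot$, and the observation that $\ker_{\Fq}(\extsmallfield(\Hsub))=\{\v\in\Fq^n:\Hsub\v^\top=\0\}$ — and both have the same easy inclusion. They diverge on the hard inclusion. The paper proves $\dim_{\Fq}\rspace{\extsmallfield(\Hsub)}{\Fq}=n-t$ by contradiction: assuming the dimension is too small, it constructs explicit full-rank transformations $\B'$, $\J$, $\D$, $\D'$ and invokes \cite[Theorem~1]{Gabidulin_TheoryOfCodes_1985} to produce a row $\g$ of $\P\H$ whose extension violates Lemma~\ref{lemma:extg}. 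You instead dualize once more: $\rspace{\Hsub}{\Fqm}=\Code_{\B}^\bot\cap\Code^\bot=(\Code_{\B}+\Code)^\bot$, so any $\Fq$-rational vector annihilated by $\Hsub$ decomposes as $\b+\c$ with $\b\in\Code_{\B}=\rspace{\B}{\Fqm}$ and $\c\in\Code$, and the rank-weight estimate $\rank_q(\c)\le\rank_q(\v)+\rank_q(\b)\le 1+t\le d-1$ forces $\c=\0$ by the definition of minimum distance; the final step $\Code_{\B}\cap\Fq^n=\rspace{\B}{\Fq}$ follows since $\B$ has an $\Fq$-right-inverse. The hypothesis $t\le d-2$ is consumed at the analogous point in both proofs (your direct minimum-distance argument and the paper's use of \cite[Theorem~1]{Gabidulin_TheoryOfCodes_1985} are equivalent formulations of the same fact), so your version also admits the relaxed condition \eqref{eq:alternative_condition} discussed in Section~\ref{ssec:more_than_d-2}. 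What your approach buys is brevity and transparency — no auxiliary matrices, no basis extension, and the role of the minimum distance is visible as ``a short $\Fq$-rational vector cannot differ from a low-rank vector by a nonzero codeword''; what the paper's approach buys is that it stays entirely in the language of $\Hsub$ and the parity-check matrix, which is the form reused verbatim in the Hamming-metric analogy of Section~\ref{ssec:analogy}.
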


\begin{IEEEproof}
Lemmas~\ref{lem:Hsub_basis_general} and \ref{lem:Hsub_basis_full-rank} show that the high-order and full-rank conditions imply
\begin{equation}
\rspace{\Hsub}{\Fqm} = \ker_{\Fqm}(\B) \cap \Code^\bot. \label{eq:Hsub_kerB_dualCode}
\end{equation}
In the following, we prove that if we consider the $\Fq$-row space of the extended $\Hsub$ instead of the $\Fqm$-row space of $\Hsub$, the result is directly the $\Fq$-kernel of $\B$, i.e.,
\begin{equation*}
\rspace{\extsmallfield(\Hsub)}{\Fq} = \ker_{\Fq}(\B).
\end{equation*}
Together with $\ker_{\Fq}(\extsmallfield(\Hsub)) = \rspace{\extsmallfield(\Hsub)}{\Fq}^\bot$ and $\ranksupp{\B} = \ker_{\Fq}(\B)^\bot$, this implies the claim.

First, we prove that $\rspace{\extsmallfield(\Hsub)}{\Fq} \subseteq \ker_{\Fq} (\B)$.
It suffices to show that any row of $\extsmallfield(\Hsub)$ is in the $\Fq$-kernel of $\B$.
Such a row is again a row $\v_i$ of $\extsmallfield(\v)$.
Since obviously $\v_i \in \Fq^n$, it is left to show that $\B \v_i^\top = 0$.
This follows from \eqref{eq:Hsub_kerB_dualCode}, which implies that $\v \in \ker_{\Fqm}(\B)$ and thus
\begin{equation*}
\0 = \extsmallfield(\B \v^\top) = \B \extsmallfield(\v)^\top,
\end{equation*}
where the second equality is true since $\B$ has entries in $\Fq$.

Second, we show $\ker_{\Fq} (\B) \subseteq \rspace{\extsmallfield(\Hsub)}{\Fq}$ by proving that $r:=\dim\big(\rspace{\extsmallfield(\Hsub)}{\Fq}\big) = \dim(\ker_{\Fq} (\B))= n-t$.
Since $\rspace{\extsmallfield(\Hsub)}{\Fq} \subseteq \ker_{\Fq} (\B)$, $r>n-t$ is not possible. In the following, we show that $r<n-t$ is also not possible meaning $r$ must be equal to $n-t$.

For the proof, we will use the following mapping, which, for arbitrary $\mu,\nu \in \mathbb{N}$ and a matrix $\A \in \Fq^{\mu \times \nu}$, sends a vector $\v \in \Fqm^\mu$ to the $\Fqm$-row space of $\A$:
\begin{align*}
\Matmap_{\A} : \Fqm^{\mu} \rightarrow \Fqm^{\nu}, \quad \v \mapsto \v \A.
\end{align*}

Let $\ve{h}_1,\hdots,\ve{h}_r\in \Fq^{n}$ be a basis of $\rspace{\extsmallfield(\Hsub)}{\Fq}$, i.e.,
\begin{equation*}
\langle \ve{h}_1,\hdots,\ve{h}_r \rangle_{\Fq} = \rspace{\extsmallfield(\Hsub)}{\Fq},
\end{equation*}
and assume $r<n-t$.

According to the basis extension theorem, there is a matrix $\Bpp \in \Fq^{(n-t)\times n}$ such that
\begin{equation*}
  \Bp :=
  \begin{mymatrix}
\B^{\top} \Bppt
    \end{mymatrix} \in \Fq^{n\times n},
\end{equation*}
has $\Fq$-rank $n$. Since $\Bp$ has full rank, the mapping $\Matmap_{\Bp}$ is bijective and thus
\begin{equation*}
  \Hp=\begin{mymatrix}
    \hp_1 \\
    \vdots \\
    \hp_r
    \end{mymatrix}
    :=
    \begin{mymatrix}
    \Matmap_{\Bp}(\h_1) \\
    \vdots \\
    \Matmap_{\Bp}(\h_r)
    \end{mymatrix} \in \Fq^{n \times n}
  \end{equation*}
  has $\Fq$-rank $r$. The vectors $\h_1,\hdots,\h_r$ are in the right $\Fq$-kernel of $\B$, so
\begin{equation*}
  \Hp=
  \begin{mymatrix}
    0 & \hdots & 0 & h^{\prime}_{1,t+1} & h^{\prime}_{1,t+2} & \hdots & h^{\prime}_{1,n}  \\ 
    \vdots &\ddots & \vdots & \vdots & \vdots & \ddots & \vdots\\
    0 & \hdots & 0 & h^{\prime}_{r,t+1} & h^{\prime}_{1,t+2} & \hdots & h^{\prime}_{r,n}  \\ 
    \end{mymatrix}.
  \end{equation*}
  
  By assumption, $r<n-t$ which means that there exists a full-rank matrix $\J \in \Fq^{(n-t)\times n}$ such that
  \begin{align}
    \Htil :&= \Hp \begin{mymatrix}
      \I \\
      \J
    \end{mymatrix} \nonumber \\
    &=
    \begin{mymatrix}
    0 & \hdots & 0 & 0 & h^{\prime}_{1,t+2} & \hdots & h^{\prime}_{1,n}  \\ 
    \vdots &\ddots & \vdots & \vdots & \vdots & \ddots & \vdots\\
    0 & \hdots & 0 & 0 & h^{\prime}_{r,t+2} & \hdots & h^{\prime}_{r,n}  \\ 
    \end{mymatrix},\label{eq:Htil}
    \end{align}
    where $\I\in \Fq^{t \times n}$ is a matrix that consists of an identity matrix in the first $t$ columns and a zero matrix in the last $n-t$ columns. Further, the matrix
    \begin{equation*}
      \D :=   \begin{mymatrix}
\B^{\top} \Bppt
    \end{mymatrix} \begin{mymatrix}
      \I \\
      \J
    \end{mymatrix} \in \Fq^{n \times n}
      \end{equation*}
      has $\Fq$-rank $n$, which means that $\Matmap_{\D}$ is bijective and the matrix $\Htil$ can be written as
  \begin{equation*}
    \Htil = \begin{mymatrix}
      \Matmap_{\D}(\h_1) \\
      \vdots \\
      \Matmap_{\D} (\h_r)
    \end{mymatrix}.
      \end{equation*}
The matrix $\Dp := \D_{[1:n],[1:t+1]} \in \Fq^{n \times t+1}$ has $\Fq$-rank $t+1$ and it follows by (\ref{eq:Htil}) that
      \begin{equation}
        \Matmap_{\Dp}(\h_i) = \begin{mymatrix} 0 & \hdots & 0  &  0  \end{mymatrix} \in \Fq^{t+1}
        \label{eq:hDp}
    \end{equation}
    for $i=1,\hdots,r$.

    However, 
    \begin{equation*}
\rank_{\Fqm}(\H \Dp ) = t+1,
      \end{equation*}
      since $\H\in \Fqm^{(n-k)\times n}$ is a parity-check matrix of a $[n,k,d]$ code and  $\rank_{\Fq}(\Dp) = t+1$~\cite[Theorem 1]{Gabidulin_TheoryOfCodes_1985}. Thus, there exists a vector $\g \in \rspace{\H}{\Fqm}$ such that
      \begin{equation}
        \g \Dp =
        \begin{mymatrix}
          0& \hdots&0 &g_{t+1}^{\prime}
        \end{mymatrix},
        \label{eq:gDp}
        \end{equation}
        where $g_{t+1}^{\prime} \in \Fqm \setminus \{0\}$. Since the first $t$ positions of $\g \Dp$ are equal to zero, $\g \in \rspace{\Hsub}{\Fqm}$. From~(\ref{eq:gDp}) follows that
      \begin{equation}
        \extsmallfield(\g) \Dp =
        \begin{mymatrix}
          0& \hdots&0 &g_{1,t+1}^{\prime} \\
          0& \hdots&0 &g_{2,t+1}^{\prime} \\
          \vdots &\ddots & \vdots & \vdots \\
          0& \hdots&0 &g_{m,t+1}^{\prime} \\
        \end{mymatrix}\in \Fq^{m \times (t+1)},
        \label{eq:extgDp}
      \end{equation}
      where $\extsmallfield(g_{t+1}^{\prime}) = \begin{mymatrix} g_{1,t+1}^{\prime} \hdots g_{m,t+1}^{\prime}\end{mymatrix}^{\top} \in \Fq^{m \times 1}$.
        Let $\g_i$ denote the $i$-th row of $\extsmallfield(\g)$ for which $g_{i,t+1}^{\prime} \neq 0$. Then by equation~(\ref{eq:extgDp}),
        \begin{equation*}
          \Matmap_{\Dp}(\g_i) =
          \begin{mymatrix}
          0& \hdots&0 &g_{i,t+1}^{\prime}
        \end{mymatrix}.
\end{equation*}
Since  $\g_i \in \rspace{\extsmallfield(\Hsub)}{\Fq}$, cf. Lemma~\ref{lemma:extg}, this constitutes a contradiction according to equation~(\ref{eq:hDp}) which says that for all $\g_i \in \rspace{\extsmallfield(\Hsub)}{\Fq}$,
        \begin{equation*}
          \Matmap_{\Dp}(\g_i) =
          \begin{mymatrix}
          0& \hdots&0 & 0
        \end{mymatrix}.
\end{equation*}
Thus, $r<n-t$ is not possible and leaves $r=n-t$ as only valid possibility, which means $\rspace{\extsmallfield(\Hsub)}{\Fq} = \ker_{\Fq}(\B)$.

Recall that $\rspace{\extsmallfield(\Hsub)}{\Fq} = \ker_{\Fq}(\B)$ is equivalent to
\begin{equation*}
\ranksupp{\E} = \rspace{\B}{\Fq} = \ker_{\Fq}(\extsmallfield(\Hsub)),
\end{equation*}
which proves the claim.
\end{IEEEproof}

\subsection{The New Algorithm}

The results of the previous subsections imply an efficient decoding algorithm for high-order interleaved codes, which is summarized in Algorithm~\ref{alg:Decoding}.
We prove its correctness and state the resulting complexity in Theorem~\ref{thm:main_statement} below.

 \printalgoIEEE{
 \DontPrintSemicolon
 \KwIn{Parity-check matrix $\H$, received word $\R$}
 \KwOut{Transmitted codeword $\C$}

$\S \gets \H \R^{\top} \in \Fqm^{(n-k)\times\ell}$. \label{line:S}

Determine $\P\in \Fqm^{(n-k)\times(n-k)}$ s.t. $\P \S = \rref{\S}$. \label{line:P}

$\Hsub \gets (\P \H)_{[t+1:n-k],[1:n]} \in \Fqm^{(n-k-t) \times n}$. \label{line:Hsub}

Determine $\B\in\Fq^{t \times n}$ s.t. $ \extsmallfield(\Hsub) \B^{\top} = \0$ and $\rank_q(\B)=t$. \label{line:B}

Determine $\A\in\Fqm^{\ell \times t} $ s.t. $(\H\B^{\top})\A^{\top} = \S$. \label{line:A}

$\C \gets \R - \A\B \in \Fqm^{\ell \times n}$. \label{line:C}

\Return{$\C$} \label{line:return}
 \caption{New Decoder for High-Order Interleaved Rank-Metric Codes} 
 \label{alg:Decoding}
 }

\begin{theorem}\label{thm:main_statement}
Let $\C$ be a codeword of a homogeneous $\ell$-interleaved rank-metric code $\IntCode{\ell ; n,k,d}$ of minimum rank distance $d$. 
Furthermore, let $\E \in \Fqm^{\ell \times n}$ be an error matrix containing $\rank_{\Fq}(\E) = t \leq d-2$ errors that fulfills $t \leq \ell$ (\textbf{high-order condition}) and $\rank_{\Fqm}(\E) = t$ (\textbf{full-rank condition}). Then $\C$ can be uniquely reconstructed from the received word
using Algorithm~\ref{alg:Decoding} in
\begin{equation*}
O(\max\{n^3,n^2\ell\})
\end{equation*}
operations in $\Fqm$ plus
\begin{equation*}
O(n^3m)
\end{equation*}
operations in $\Fq$.
\end{theorem}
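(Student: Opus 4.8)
The plan is to split the claim into a correctness statement and a complexity statement, because the conceptual work for correctness has already been isolated in Theorem~\ref{thm:kernelB} and Lemma~\ref{lem:get_E_from_B}, so what remains is mostly bookkeeping.

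For correctness I would first observe that the syndrome computed in Line~\ref{line:S} equals $\S = \H\R^\top = \H\C^\top + \H\E^\top = \H\E^\top$, since every row of $\C$ lies in $\Code$ and hence $\H\C^\top = \0$. Lines~\ref{line:P}--\ref{line:Hsub} then build exactly the matrix $\Hsub$ of Lemma~\ref{lem:Hsub_basis_general}. The hypotheses of Theorem~\ref{thm:kernelB}, namely $t\le d-2$ together with the high-order and full-rank conditions, are precisely those assumed here, so I may invoke it to conclude $\ker_{\Fq}(\extsmallfield(\Hsub)) = \ranksupp{\E}$. This kernel therefore has $\Fq$-dimension $t$, which guarantees that a full-rank $\B\in\Fq^{t\times n}$ as demanded in Line~\ref{line:B} exists and that its rows form a basis of the rank support $\ranksupp{\E}$. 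Since $t<d$ and $\rank_q(\B)=t$, Lemma~\ref{lem:get_E_from_B} applies and shows that the linear system in Line~\ref{line:A} has a unique solution $\A$ with $\A\B = \E$. Consequently Line~\ref{line:C} returns $\R-\A\B = \R-\E = \C$, the transmitted codeword. The output is well-defined, i.e.\ independent of the admissible choice of $\P$ (the bottom rows of $\P$ always span the left null space of $\S$, so $\rspace{\Hsub}{\Fqm}$ and hence $\ker_{\Fq}(\extsmallfield(\Hsub))$ is intrinsic to $\R$) and of the chosen basis $\B$ (since $\A\B$ equals $\E$ regardless), which yields unique reconstruction.

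For the complexity I would account for each line separately. Line~\ref{line:S} is a product of an $(n-k)\times n$ by an $n\times\ell$ matrix, costing $O(n^2\ell)$ operations in $\Fqm$. Line~\ref{line:P} is a Gaussian elimination of $\S$ augmented by an $(n-k)\times(n-k)$ identity, i.e.\ row reduction of an $(n-k)\times(\ell+n-k)$ matrix, costing $O(\max\{n^3,n^2\ell\})$ operations in $\Fqm$; applying the relevant rows of $\P$ to $\H$ in Line~\ref{line:Hsub} costs a further $O(n^3)$ operations in $\Fqm$. Line~\ref{line:B} is the only step carried out over the small field: forming $\extsmallfield(\Hsub)\in\Fq^{(n-k-t)m\times n}$ and computing its right kernel amounts to Gaussian elimination of a matrix with $O(nm)$ rows and $n$ columns, costing $O(n^3m)$ operations in $\Fq$. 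Line~\ref{line:A} costs $O(\max\{n^3,n^2\ell\})$ operations in $\Fqm$ by Lemma~\ref{lem:get_E_from_B}, and the product and subtraction in Line~\ref{line:C} cost $O(\ell n^2)$ operations in $\Fqm$. Summing, the $\Fqm$-operations total $O(\max\{n^3,n^2\ell\})$ and the $\Fq$-operations total $O(n^3m)$, as claimed.

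I do not expect a serious obstacle, since Theorem~\ref{thm:kernelB} supplies the decisive structural identity and correctness then reduces to chaining the earlier lemmas. The one delicate point is the complexity of Line~\ref{line:B}: one must notice that $\extsmallfield(\Hsub)$ is tall, having $(n-k-t)m = O(nm)$ rows but only $n$ columns, so that its kernel is obtained in $O(n^3m)$ rather than, say, $O((nm)^3)$ operations. This is the single step performed over $\Fq$ and is exactly what accounts for the separate $\Fq$-cost term in the statement.
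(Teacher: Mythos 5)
Your proposal is correct and follows essentially the same route as the paper's proof: correctness by chaining Theorem~\ref{thm:kernelB} (to recover the support as $\ker_{\Fq}(\extsmallfield(\Hsub))$) with Lemma~\ref{lem:get_E_from_B} (to recover $\A$ uniquely), followed by the same line-by-line complexity accounting with matching bounds for each step. The added remark on independence from the choice of $\P$ and $\B$ is a small bonus not spelled out in the paper but changes nothing substantive.
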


\begin{proof}
According to Lemma~\ref{lem:E=AB}, the error matrix can be decomposed into $\E = \A \B$. Algorithm~\ref{alg:Decoding} determines first a basis of the error's rowspace, i.e., $\B$, and then the matrix $\A$. 

To determine $\B$, the algorithm computes in Step~\ref{line:P} a transformation matrix $\P$ such that $\P\S$ is in row echelon form. In Step~\ref{line:Hsub}, $\Hsub$ is chosen as the last $n-k-t$ rows of $\P\H$.
Then, Algorithm~\ref{alg:Decoding} determines $\B$ by finding a basis of $\ker_{\Fq}(\extsmallfield(\Hsub))$ in Step~\ref{line:B}.
This is possible due to Theorem~\ref{thm:kernelB} and the condition $\rank_{\Fqm}(\E)=t$.

Having $\B$, the matrix $\A$ is obtained in Step~\ref{line:A} by solving $\S = (\H\B^{\top})\A^{\top}$ for $\A$, see Lemma~\ref{lem:get_E_from_B}. Thus, Algorithm~\ref{alg:Decoding} returns the transmitted codeword in Step~\ref{line:return}.

The complexity of the steps in the algorithm are given by
\begin{itemize}
\item Line~\ref{line:S} (Multiplication of $\H \R^\top$) requires $O(n(n-k)\ell)\subseteq O(n^2\ell)$ operations in $\Fqm$.
\item Line~\ref{line:P} (Transformation of $\begin{mymatrix} \S & \I \end{mymatrix}$ in reduced row echelon form) needs $O((n-k)^2(\ell+n-k)) \subseteq O(\max \{n^3,n^2\ell\} )$ operations in $\Fqm$.
\item Line~\ref{line:Hsub} (Multiplication of $(\P \H)_{[t+1:n-k],[1:n]}$) requires $(n(n-k-t)(n-k)) \subseteq  O(n^3)$ operations in $\Fqm$.
\item Line~\ref{line:B} (Transformation of $\begin{mymatrix} \extsmallfield(\Hsub)^{\top} & \I^{\top} \end{mymatrix}^{\top} $ in column echelon form) needs $O(n^2((n-k-t)m+n)) \subseteq O(n^3m)$ operations in $\Fq$.
\item Line~\ref{line:A} (Transformation of $\begin{mymatrix} (\H\B^{\top}) & \S \end{mymatrix} $ in row echelon form) requires $O((n-k)^2(t+\ell)) \subseteq O(\max\{n^3,n^2\ell\})$ operations in $\Fqm$.
\item Line~\ref{line:C} (Multiplication of $\A\B$) needs $O(\ell t n) \subseteq O(\ell n^2$ operations in $\Fqm$.
\item Line~\ref{line:C} (Subtraction of $\R$ and $(\A\B)$) needs $O(\ell n)$ operations in $\Fqm$.
\end{itemize}
Thus, Algorithm~\ref{alg:Decoding} requires $O(\max\{n^3,n^2\ell\})$ operations in $\Fqm$ and $O(n^3m)$ operations in $\Fq$.
\end{proof}

\begin{remark}\label{rem:complexity_in_Fq}
Using a standard (e.g., polynomial or low-complexity normal) basis of $\Fqm$ over $\Fq$ with standard algorithms (e.g., naive polynomial multiplication and division), operations in $\Fqm$ cost $O(m^2)$ operations in $\Fq$. Thus, Algorithm~\ref{alg:Decoding} has overall complexity
\begin{equation*}
O\!\left( \max\{n^3,n^2\ell\} m^2 \right)
\end{equation*}
over $\Fq$.
Asymptotically, it can be implemented faster.
The bases of $\Fqm$ over $\Fq$ presented in \cite{couveignes2009elliptic} enable to multiply, add, and $q$-power elements of $\Fqm$ in, asymptotically, $O^\sim(m)$ operations over $\Fq$, where $O^\sim$ neglects $\log$-factors in $m$.
This is done by representing field elements in suitable bases of $\Fqm$ over $\Fq$.
Hence, the overall complexity of Algorithm~\ref{alg:Decoding} can be given as
\begin{equation*}
O^\sim\!\left( \max\{n^3,n^2\ell\} m \right)
\end{equation*}
operations in $\Fq$.
Note, however, that the cost for a fixed $m$ might be larger than the standard approach due to a larger hidden constant.
\end{remark}

\subsection{Implementation of the Algorithm}

We have implemented and tested Algorithm~\ref{alg:Decoding} in the computer-algebra system SageMath v8.6~\cite{sagemath}.
The implementation and test scripts are available online under \url{https://bitbucket.org/julianrenner/ir_decoder}.

\subsection{Analogy to Metzner--Kapturowski in the Hamming Metric}\label{ssec:analogy}

The new decoder, Algorithm~\ref{alg:Decoding}, is the rank-metric analog of the Metzner--Kapturowski algorithm in the Hamming metric.
In the following, we will draw important connections between the algorithms, which once again substantiate the analogy of the support in the Hamming metric and the rank support in the rank metric.
The comparison is illustrated in Figure~\ref{fig:support_illustration}.

\begin{figure}[ht]
\begin{center}
  \resizebox{\linewidth}{!}{
\begin{tikzpicture}[scale=0.7]
\def\nlen{4.4}
\def\elllen{2.5}
\def\tlen{1.5}
\def\ylabeldist{1.7}
\def\xAdist{3.7}
\def\xBdist{7.0}
\def\myeps{0.5}
\def\ydist{6}

\def\Ewidth{0.5}
\def\xEone{1.0}
\def\xEtwo{2.5}
\def\xEthree{\xEtwo+\Ewidth}

\def\yArlen{1.5}
\def\yArlenTwo{0.25}

\def\Hsubdis{5.0}
\def\nktlen{1.5}

\def\xHsub{1.0}
\def\xHsubExt{\xHsub+6.5}

\def\matExt{2}

\def\Hsubdis{5.0}
\def\nktlen{2.2}
\def\extnktlen{4.4}
\def\Formdist{9.0}

\def\rkdist{14.0}

\def\linedist{0.5*(\rkdist+\xBdist)}

\def\xCaptionOne{3}
\def\xCaptionTwo{18}
\def\yCaption{3}

\node at (\xCaptionOne,\yCaption) {\textbf{Hamming Metric}};
\node at (\xCaptionTwo,\yCaption) {\textbf{Rank Metric}};

% E Matrix
\draw (-0.5*\nlen,-0.5*\elllen) rectangle (0.5*\nlen,0.5*\elllen);
\node[left] at (-0.5*\nlen,0) {$\ell$};
\node[above] at (0,0.5*\elllen) {$n$};
\node at (0,-\ylabeldist) {$\E$};

\draw[pattern=north west lines] (-0.5*\nlen+\xEone,-0.5*\elllen) rectangle (-0.5*\nlen+\xEone+1*\Ewidth,0.5*\elllen);
\draw[pattern=north west lines] (-0.5*\nlen+\xEtwo,-0.5*\elllen) rectangle (-0.5*\nlen+\xEtwo+1*\Ewidth,0.5*\elllen);
\draw[pattern=north west lines] (-0.5*\nlen+\xEthree,-0.5*\elllen) rectangle (-0.5*\nlen+\xEthree+1*\Ewidth,0.5*\elllen);
\node at (0.5*\nlen+\myeps,0) {$=$};

\draw[<-,>=latex] (-0.5*\nlen+\xEone+0.5*\Ewidth,-0.5*\elllen) to (-0.5*\nlen+\xEone+0.5*\Ewidth,-\yArlen-0.5*\elllen);

\draw[<-,>=latex] (-0.5*\nlen+\xEtwo+0.5*\Ewidth,-0.5*\elllen) to (-0.5*\nlen+\xEtwo+0.5*\Ewidth,-\yArlen-0.5*\elllen);

\draw[<-,>=latex] (-0.5*\nlen+\xEthree+0.5*\Ewidth,-0.5*\elllen) to (-0.5*\nlen+\xEthree+0.5*\Ewidth,-\yArlen-0.5*\elllen);

\node[below] at (0.25,-\yArlen-0.5*\elllen) {error positions};

% A Matrix
\draw[pattern=north west lines] (\xAdist-0.5*\tlen,-0.5*\elllen) rectangle (\xAdist+0.5*\tlen,0.5*\elllen);
\node[above] at (\xAdist+0,0.5*\elllen) {$t$};
\node at (\xAdist+0,-\ylabeldist) {$\A$};
\node at (\xAdist+0.5*\tlen+0.5*\myeps,0) {$\cdot$};

% B Matrix
\draw (\xBdist-0.5*\nlen,-0.5*\tlen) rectangle (\xBdist+0.5*\nlen,0.5*\tlen);
\node[right] at (\xBdist+0.5*\nlen,0) {$t$};
\node[below] at (\xBdist+0,0.5*\elllen) {$n$};
\node at (\xBdist+0,-\ylabeldist) {$\B$};

\node at (\xBdist-0.5*\nlen+\xEone+0.5*\Ewidth,1/3*\tlen) {$1$};
\node at (\xBdist-0.5*\nlen+\xEtwo+0.5*\Ewidth,0/3*\tlen) {$1$};
\node at (\xBdist-0.5*\nlen+\xEthree+0.5*\Ewidth,-1/3*\tlen) {$1$};

\draw (\xBdist-0.5*\nlen+\xEone,-0.5*\tlen) rectangle (\xBdist-0.5*\nlen+\xEone+1*\Ewidth,0.5*\tlen);
\draw (\xBdist-0.5*\nlen+\xEtwo,-0.5*\tlen) rectangle (\xBdist-0.5*\nlen+\xEtwo+1*\Ewidth,0.5*\tlen);
\draw (\xBdist-0.5*\nlen+\xEthree,-0.5*\tlen) rectangle (\xBdist-0.5*\nlen+\xEthree+1*\Ewidth,0.5*\tlen);

\draw[<-,>=latex] (\xBdist-0.5*\nlen+\xEone+0.5*\Ewidth,-0.5*\tlen) to (\xBdist-0.5*\nlen+\xEone+0.5*\Ewidth,-\yArlen-0.5*\elllen);

\draw[<-,>=latex] (\xBdist-0.5*\nlen+\xEtwo+0.5*\Ewidth,-0.5*\tlen) to (\xBdist-0.5*\nlen+\xEtwo+0.5*\Ewidth,-\yArlen-0.5*\elllen);

\draw[<-,>=latex] (\xBdist-0.5*\nlen+\xEthree+0.5*\Ewidth,-0.5*\tlen) to (\xBdist-0.5*\nlen+\xEthree+0.5*\Ewidth,-\yArlen-0.5*\elllen);

\node[below] at (0.25+\xBdist,-\yArlen-0.5*\elllen) {error positions};

% Hsub matrix
\node[left] at (\xAdist-0.5*\nlen-2*\myeps,-\Hsubdis) {$\Hsub$};
\node at (\xAdist-0.5*\nlen-\myeps,-\Hsubdis) {$=$};
\draw[pattern=north west lines] (\xAdist-0.5*\nlen,-\Hsubdis-0.5*\nktlen) rectangle (\xAdist+0.5*\nlen,-\Hsubdis+0.5*\nktlen);
\node[right] at (\xAdist+0.5*\nlen,-\Hsubdis) {$n-k-t$};
\node[above] at (\xAdist,0.5*\nktlen-\Hsubdis) {$n$};

\draw[fill=white] (\xAdist-0.5*\nlen+\xEone,-\Hsubdis-0.5*\nktlen) rectangle (\xAdist-0.5*\nlen+\xEone+1*\Ewidth,-\Hsubdis+0.5*\nktlen);

\draw[fill=white] (\xAdist-0.5*\nlen+\xEtwo,-\Hsubdis-0.5*\nktlen) rectangle (\xAdist-0.5*\nlen+\xEtwo+1*\Ewidth,-\Hsubdis+0.5*\nktlen);

\draw[fill=white] (\xAdist-0.5*\nlen+\xEthree,-\Hsubdis-0.5*\nktlen) rectangle (\xAdist-0.5*\nlen+\xEthree+1*\Ewidth,-\Hsubdis+0.5*\nktlen);

\draw[<-,>=latex] (\xAdist-0.5*\nlen+\xEone+0.5*\Ewidth,-0.5*\nktlen-\Hsubdis) to (\xAdist-0.5*\nlen+\xEone+0.5*\Ewidth,-\yArlenTwo-0.5*\elllen-\Hsubdis);

\draw[<-,>=latex] (\xAdist-0.5*\nlen+\xEtwo+0.5*\Ewidth,-0.5*\nktlen-\Hsubdis) to (\xAdist-0.5*\nlen+\xEtwo+0.5*\Ewidth,-\yArlenTwo-0.5*\elllen-\Hsubdis);

\draw[<-,>=latex] (\xAdist-0.5*\nlen+\xEthree+0.5*\Ewidth,-0.5*\nktlen-\Hsubdis) to (\xAdist-0.5*\nlen+\xEthree+0.5*\Ewidth,-\yArlenTwo-0.5*\elllen-\Hsubdis);

\node[below] at (\xAdist+0.25,-\yArlenTwo-0.5*\elllen-\Hsubdis) {zero columns exactly in error positions};

\node at (\xAdist,-\Formdist) {
$    \begin{aligned}
  \Rightarrow \mathrm{supp}_{\text{H}}(\E) = \bigcup_{i} \mathrm{supp}_{\text{H}}(\B_{i}) % \\
  = \overline{\bigcup_{i} \mathrm{supp}_{\text{H}}(\Hsubi)}
\end{aligned} $
    };

%%%%%%%%%
% Trennlinie
%%%%%%%%%    
\draw (0.5*\rkdist+0.5*\xBdist,-10) to (0.5*\rkdist+0.5*\xBdist,3.5);

%%%%%%%%
%%% Rank
%%%%%%%%

% E Matrix
\draw[pattern=north west lines] (\rkdist-0.5*\nlen,-0.5*\elllen) rectangle (\rkdist+0.5*\nlen,0.5*\elllen);
\node[left] at (\rkdist-0.5*\nlen,0) {$\ell$};
\node[above] at (\rkdist,0.5*\elllen) {$n$};
\node at (\rkdist,-\ylabeldist) {$\E$};

\node at (\rkdist+0.5*\nlen+\myeps,0) {$=$};

% A Matrix
\draw[pattern=north west lines] (\rkdist+\xAdist-0.5*\tlen,-0.5*\elllen) rectangle (\rkdist+\xAdist+0.5*\tlen,0.5*\elllen);
\node[above] at (\rkdist+\xAdist+0,0.5*\elllen) {$t$};
\node at (\rkdist+\xAdist,-\ylabeldist) {$\A$};
\node at (\rkdist+\xAdist+0.5*\tlen+0.5*\myeps,0) {$\cdot$};

% B Matrix
\draw (\rkdist+\xBdist-0.5*\nlen,-0.5*\tlen) rectangle (\rkdist+\xBdist+0.5*\nlen,0.5*\tlen);
\node[right] at (\rkdist+\xBdist+0.5*\nlen,0) {$t$};
\node[below] at (\rkdist+\xBdist+0,0.5*\elllen) {$n$};
\node at (\rkdist+\xBdist+0,-\ylabeldist) {$\B$};
\node at (\rkdist+\xBdist,0) {$\Fq$};

% Hsub matrix
\node[left] at (\rkdist+\xHsub-0.5*\nlen-1.25*\myeps,-\Hsubdis) {$\Hsub$};
\node at (\rkdist+\xHsub-0.5*\nlen-\myeps,-\Hsubdis) {$=$};
\draw (\rkdist+\xHsub-0.5*\nlen,-\Hsubdis-0.5*\nktlen) rectangle (\rkdist+\xHsub+0.5*\nlen,-\Hsubdis+0.5*\nktlen);

\draw (\rkdist+\xHsub-0.5*\nlen,-\Hsubdis+0.25*\nktlen) rectangle (\rkdist+\xHsub+0.5*\nlen,-\Hsubdis+0.5*\nktlen);
\node at (\rkdist+\xHsub,-\Hsubdis+0.375*\nktlen) {$\HsubEins$};

\node at (\rkdist+\xHsub,-\Hsubdis+0.15){$\vdots$};

\draw (\rkdist+\xHsub-0.5*\nlen,-\Hsubdis-0.5*\nktlen) rectangle (\rkdist+\xHsub+0.5*\nlen,-\Hsubdis-0.25*\nktlen);
\node at (\rkdist+\xHsub,-\Hsubdis-0.375*\nktlen) {$\HsubNKT$};

\node at (\rkdist+\xHsubExt-0.5*\nlen-3.5*\myeps,-\Hsubdis) {$\mapsto$};

\node at (\rkdist+\xHsubExt-0.5*\nlen,-\Hsubdis) {$\extsmallfield_{\boldsymbol{\gamma}}(\Hsub)$};

\node at (\rkdist+\xHsubExt-0.5*\nlen+0.75*\matExt,-\Hsubdis) {$=$};
\draw (\rkdist+\xHsubExt-0.5*\nlen+\matExt,-\Hsubdis-0.5*\extnktlen) rectangle (\rkdist+\xHsubExt+0.5*\nlen+\matExt,-\Hsubdis+0.5*\extnktlen);

\draw (\rkdist+\xHsubExt-0.5*\nlen+\matExt,-\Hsubdis+0.25*\extnktlen) rectangle (\rkdist+\xHsubExt+0.5*\nlen+\matExt,-\Hsubdis+0.5*\extnktlen);
\node[align=center] at (\rkdist+\xHsubExt+\matExt,-\Hsubdis+0.375*\extnktlen) {gen. set of \\$\ranksupp{\HsubEins}$};

\node at (\rkdist+\xHsubExt+\matExt,-\Hsubdis+0.15){$\vdots$};

\draw (\rkdist+\xHsubExt-0.5*\nlen+\matExt,-\Hsubdis-0.5*\extnktlen) rectangle (\rkdist+\xHsubExt+0.5*\nlen+\matExt,-\Hsubdis-0.25*\extnktlen);
\node[align=center] at (\rkdist+\xHsubExt+\matExt,-\Hsubdis-0.375*\extnktlen) {gen. set of \\$\ranksupp{\HsubNKT}$};

\node at (\rkdist+\xAdist,-\Formdist) {
$    \begin{aligned}
  \Rightarrow \ranksupp{\E}  = \sum_{i} \ranksupp{\B_{i}} %\\
   = \bigg(\sum_{i} \ranksupp{\Hsubi}\bigg)^{\bot}
\end{aligned} $
    };

\end{tikzpicture}
}
\end{center}
\caption{Illustration of the error support in Hamming metric (left part) and in rank metric (right part).}
\label{fig:support_illustration}
\end{figure}
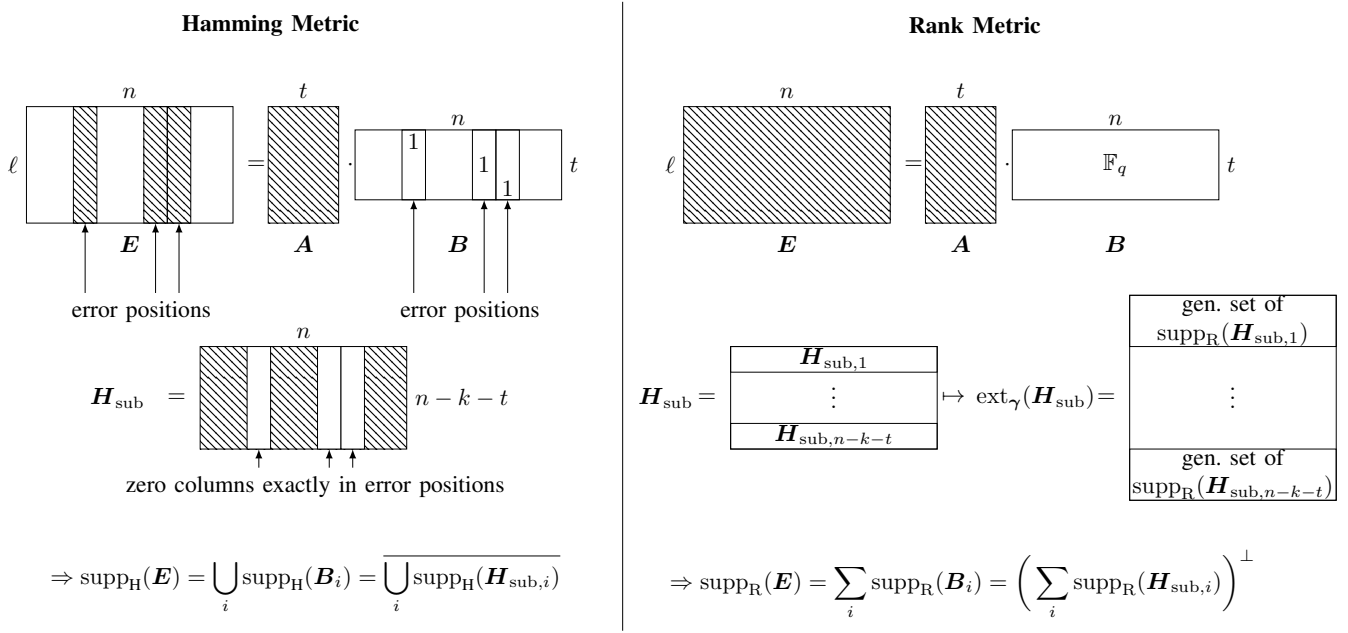

By \eqref{eq:Hamming_decomposition}, an error matrix $\E$ with $t$ errors in the Hamming metric can be decomposed into $\E = \A \B$, where the rows of $\B$ are the identity vectors corresponding to the $t$ error positions.
Hence, the support of the error matrix is given by the union of the supports of the rows $\B_i$ of the matrix $\B$, i.e.,
\begin{align*}
\mathrm{supp}_{\text{H}}(\E) = \bigcup_{i} \mathrm{supp}_{\text{H}}(\B_{i}).
\end{align*}
If the Metzner--Kapturowski algorithm works, the error positions are exactly the zero columns in the matrix $\Hsub$. The indices of the non-zero columns can be written as $\bigcup_{i} \mathrm{supp}_{\text{H}}(\Hsubi)$, where $\Hsubi$ is the $i$-th row of $\Hsub$.
Thus, the error positions are the complement ($=$ indices of the other rows) thereof, i.e.,
\begin{align}
\mathrm{supp}_{\text{H}}(\E)
= \overline{\bigcup_{i} \mathrm{supp}_{\text{H}}(\Hsubi)} \ . \label{eq:support_Hamming}
\end{align}

A very similar formula can be derived in the rank metric.
By Lemma~\ref{lem:E=AB}, an error in the rank metric can be decomposed into $\E=\A \B$, there $\B$ is a matrix over the small field $\Fq$ (note that this includes the Hamming metric case since $0,1 \in \Fqm$ are in the subfield $\Fq$).
The rank support of $\E$ equals the row space of $\extsmallfield(\B)$, which is spanned by the union (over $i$) of all rows of the $\extsmallfield(\B_i)$, where $\B_i$ is the $i$-th row of $\B$.
Since the row space of $\extsmallfield(\B_i)$ is the rank support of $\B_i$, the rank support of $\E$ is given by
\begin{align*}
\ranksupp{\E}  = \sum_{i} \ranksupp{\B_{i}},
\end{align*}
where the sum $\sum$ denotes the Minkowski sum of vector spaces (i.e., the span of the union of the spaces).

If the algorithm works correctly, the rank support of $\E$ is given by the kernel of $\extsmallfield(\Hsub)$ according to Theorem~\ref{thm:kernelB}.
This is the dual space of the row space of $\extsmallfield(\Hsub)$.
The row space of $\extsmallfield(\Hsub)$ is again the (Minkowski) sum of the row spaces of $\extsmallfield(\Hsubi)$, i.e., the rank support of the $i$-th row $\Hsubi$ of $\Hsub$.
Hence, we can write the rank support of $\E$ as
\begin{align}
\ranksupp{\E} 
= \bigg(\sum_{i} \ranksupp{\Hsubi}\bigg)^{\bot} \ . \label{eq:support_rank}
\end{align}

In summary, we see that there is an analogy of the notions of support in both metrics.
Equations \eqref{eq:support_Hamming} and \eqref{eq:support_rank} are dual to each other as outlined in Table~\ref{tab:duality_supports}.

\begin{table}[h]
\caption{Duality of Equations \eqref{eq:support_Hamming} and \eqref{eq:support_rank} (see also Figure~\ref{fig:support_illustration}).}
\label{tab:duality_supports}
\vspace{-0.3cm}
\begin{center}
\normalsize
\def\arraystretch{1.3}
\begin{tabular}{c|c}
Hamming metric & Rank metric \\
\hline
Support $\suppH$ & Rank support $\suppR$ \\
Union $\bigcup$ & Span of the union / Minkowski sum $\sum$ \\
Complement $\overline{\cdot}$ & Dual space $\cdot^\bot$
\end{tabular}
\end{center}
\vspace{-0.5cm}
\end{table}

\subsection{Example}
In this section, we illustrate the algorithm using example values. We consider the field $\mathbb{F}_{2^5}$ with the field polynomial $f(x) = x^5+x^2+1$ and primitive element $\alpha$. Further, we choose $\boldsymbol{\gamma} = \begin{mymatrix} 1 &\alpha &\alpha^2 & \alpha^3 & \alpha^4 \end{mymatrix}$ to be a polynomial basis and $\IntCode{\ell;n,k,d}$ as an interleaved Gabidulin code with $\ell=2$, $n=5$, $k=2$, $d=4$ and code locators $ 1,\alpha,\alpha^2,\alpha^3, \alpha^4$. We use the generator matrix
\begin{equation*}
  \G = \begin{mymatrix}
    1 & \alpha & \alpha^2 & \alpha^3 & \alpha^4\\
    1 & \alpha^2 & \alpha^4 & \alpha^6 & \alpha^8\\
    \end{mymatrix}
  \end{equation*}
 and parity-check matrix
  \begin{equation*}    
  \H = \begin{mymatrix}
    1 & 0 & 0 & \alpha^{17} & \alpha^{4}  \\
    0 & 1 & 0 & \alpha^{7}  & \alpha^{13} \\
    0 & 0 & 1 & \alpha^{16} & \alpha^{28} \\
    \end{mymatrix},
  \end{equation*}
  respectively.
  The transmitted codeword
  \begin{equation*}
  \C = \begin{mymatrix}
    \alpha & 1\\
    \alpha^2 & \alpha\\
    \end{mymatrix} \G = \begin{mymatrix}
    \alpha^{18} & 0 & \alpha^{21} & \alpha^{9}  & \alpha^{3} \\
    \alpha^{19} & 0 & \alpha^{22} & \alpha^{10} & \alpha^{4} \\
    \end{mymatrix}
  \end{equation*}
  is corrupted by an error
  \begin{equation*}
  \E = \begin{mymatrix}
    \alpha^3 & \alpha^1 & \alpha^3 & \alpha^1 & \alpha^1 \\
     \alpha^1 & \alpha^2 &  \alpha^1 & \alpha^2 & \alpha^2 \\
    \end{mymatrix},
  \end{equation*}  
  resulting in the received word
  \begin{equation*}
  \R = \C + \E = \begin{mymatrix}
    \alpha^{27} & \alpha^1 & \alpha^{4}   & \alpha^{21} & \alpha^{6} \\
    \alpha^{2}  & \alpha^2 & \alpha^{26}  & \alpha^{22} & \alpha^{7} \\
    \end{mymatrix}.
  \end{equation*}  
  To retrieve $\C$ from $\R$, we first compute the syndrome
  \begin{equation*}
  \S = \H  \R^{\top} = \begin{mymatrix}
    \alpha^{12}  & \alpha^{12}  \\
    \alpha^{30}  & 1 \\
    \alpha^{30}  & \alpha^{17}  \\
    \end{mymatrix}
  \end{equation*}  
  and bring it, using
  \begin{equation*}
  \P = \begin{mymatrix}
    0 & \alpha^{19} & \alpha^2  \\ 
    0 & \alpha    & \alpha    \\
    1 & \alpha^{14} & 1  \\
    \end{mymatrix},
  \end{equation*}  
  in row echelon form
  \begin{equation*}
  \P\S = \begin{mymatrix}
    1 & 0  \\
    0 & 1 \\
    0 & 0
    \end{mymatrix}.
  \end{equation*}  
  We take the last $n-k-t=1$ rows from 
  \begin{equation*}
  \P\H = \begin{mymatrix}
    0 & \alpha^{19} & \alpha^2 & \alpha^{7} & \alpha^{4} \\
    0 & \alpha & \alpha & \alpha^{24} & \alpha^{7}  \\
    1 & \alpha^{14} & 1 & \alpha^{4} & \alpha^{8} 
    \end{mymatrix},
  \end{equation*}    
  to get
  \begin{equation*}
  \Hsub = \begin{mymatrix}
    1 & \alpha^{14} & 1 & \alpha^{4} & \alpha^{8} \\
    \end{mymatrix}.
  \end{equation*}    
  Using the polynomial basis $\boldsymbol{\gamma}$, we extend $\Hsub$ to
  \begin{equation*}
  \extsmallfield(\Hsub) = \begin{mymatrix}
    1 & 1 & 1 & 0 & 1 \\
    0 & 0 & 0 & 0 & 0\\
    0 & 1 & 0 & 0 & 1\\
    0 & 1 & 0 & 0 & 1\\
    0 & 1 & 0 & 1 & 0\\
    \end{mymatrix}.
  \end{equation*}    

  The right $\Fq$-kernel of $\extsmallfield(\Hsub)$ is a two-dimensional subspace of $\Fq^n$ with basis vectors $\begin{mymatrix} 1 & 0 & 1 & 0 & 0 \end{mymatrix}$ and $\begin{mymatrix} 0 & 1 & 0 & 1 & 1 \end{mymatrix}$. Thus,
  \begin{equation*}
    \B = \begin{mymatrix}
    1 & 0 & 1 & 0 & 0 \\
    0 & 1 & 0 & 1 & 1
    \end{mymatrix}.
  \end{equation*}    
 Solving
  \begin{align*}
    \H\B^{\top}\A^{\top} &= \S \\
    \begin{mymatrix}
      1 & \alpha^{18} \\
      0 & \alpha^{29} \\
      1 & \alpha^{8}
    \end{mymatrix}\A^{\top} &=
\begin{mymatrix}
    \alpha^{12}  & \alpha^{12}  \\
    \alpha^{30}  & 1 \\
    \alpha^{30}  & \alpha^{17}  \\
    \end{mymatrix}
  \end{align*}    
  for $\A$ gives
\begin{equation*}
  \A = \begin{mymatrix}
    \alpha^{3} & \alpha  \\
    \alpha & \alpha^{2}
    \end{mymatrix}.
  \end{equation*}    
  The codeword is then retrieved by $\C = \R -  \A \B$.

\section{Further Results}\label{sec:further_results}

In this section, we present further results related to the new decoding algorithm.
We show that most errors of a given weight fulfill the full-rank condition, which directly implies a lower bound on the success probability for random errors, cf.~Section~\ref{ssec:number_matrices_lower_bound}.
By utilizing the well-known equivalence of homogeneous interleaved codes and certain linear codes over large extension fields, we obtain a class of linear rank-metric codes that are able to correct most rank errors of weight up to $d-2$, cf.~Section~\ref{ssec:linear_codes_correcting_d-2_errors}.
In Section~\ref{ssec:more_than_d-2}, we show that the new decoder is able to correct even more than the minimum distance number of errors in some cases.
We also give an idea how the algorithm can be adapted to some heterogeneous codes, cf.~Section~\ref{ssec:inhomogeneous_codes}.
Finally, we draw connections to existing decoding algorithms for the special case of interleaved Gabidulin codes in Section~\ref{ssec:relation_to_interleaved_gabidulin_decoders}: we show that these decoders succeed under the same condition as our new decoder, which significantly simplifies the known decoders' success conditions for the high-order case.

\subsection{A Bound on the Success Probability for Random Errors}\label{ssec:number_matrices_lower_bound}

In the following, we derive a lower bound on the success probability of the new decoder, given that the error is drawn uniformly at random from all error matrices of a given weight $\rank_{\Fq}(\E)=t$.
The bound is obtained by well-known results about the number of matrices of a given rank.
For fixed code parameters and $t\leq d-2$, the success probability of the new decoder approaches $1$ exponentially in the difference of the number of errors $t$ and the interleaving order $\ell$.

To derive the lower bound, we need the following lemma, which follows by well-known combinatorial results, see, e.g., \cite[Lemma~3.13]{Overbeck_Diss_InterleveadGab}.
\begin{lemma}\label{lem:lower_bound}
  Let $\ell\geq t$ and $\A$ be drawn uniformly at random from the set of matrices in $\Fqm^{\ell \times t}$ of $\Fq$-rank $\rank_{\Fq}(\A)=t$. Then, the probability that $\rank_{\Fqm}(\A) = t$ is lower-bounded by
  \begin{equation}
\geq \prod_{i=0}^{t-1} (1-q^{m(i-\ell)}) \geq 1-tq^{m(t-1-\ell)}. \label{eq:number_of_matrices_of_rank_t}
    \end{equation}
 \end{lemma}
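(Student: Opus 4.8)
The plan is to express the probability as the ratio of two exact counts and then bound that ratio. First I would reformulate the two rank conditions in terms of the $t$ columns of $\A$, viewed as vectors in $\Fqm^\ell$: the condition $\rank_{\Fqm}(\A)=t$ means these columns are $\Fqm$-linearly independent, while $\rank_{\Fq}(\A)=t$ means they are $\Fq$-linearly independent (equivalently, the extended columns span a $t$-dimensional $\Fq$-subspace of $\Fqm^\ell \cong \Fq^{\ell m}$, using that $\extsmallfield$ is $\Fq$-linear and injective). Because the general inequality $\rank_{\Fqm}(\A)\le\rank_{\Fq}(\A)$ together with $\rank_{\Fq}(\A)\le t$ forces any full-$\Fqm$-rank matrix to also have full $\Fq$-rank, the target event is contained in the conditioning event, so the conditional probability is well-defined.

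Next I would count both sets by the standard ``choose each column outside the span of the previous ones'' argument. Counting over $\Fqm$ gives $N_{\Fqm} = \prod_{i=0}^{t-1}(q^{m\ell} - q^{mi})$, since the $i$-th column must avoid the $\Fqm$-span of the earlier $i$ columns (of size $q^{mi}$) inside $\Fqm^\ell$ (of size $q^{m\ell}$). Counting $\Fq$-independent tuples instead gives $N_{\Fq} = \prod_{i=0}^{t-1}(q^{m\ell} - q^{i})$, because now the forbidden span is only the $\Fq$-span of the earlier columns, of size $q^{i}$. The desired probability is then $P = N_{\Fqm}/N_{\Fq}$.

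For the first inequality I would pull $q^{m\ell}$ out of each factor of the numerator, writing $N_{\Fqm} = q^{m\ell t}\prod_{i=0}^{t-1}(1 - q^{m(i-\ell)})$, so that the claimed lower bound is exactly $N_{\Fqm}/q^{m\ell t}$. The inequality $P \ge N_{\Fqm}/q^{m\ell t}$ then reduces to $N_{\Fq} \le q^{m\ell t}$, which is immediate since $q^{m\ell t}=(q^{m\ell})^t$ is the number of all column tuples and hence dominates the count of $\Fq$-independent ones. For the second inequality I would apply the Weierstrass product inequality $\prod_i(1-x_i)\ge 1-\sum_i x_i$ with $x_i = q^{m(i-\ell)}\in(0,1)$ (note $i\le t-1\le\ell-1$ by $\ell\ge t$), and then bound $\sum_{i=0}^{t-1}q^{m(i-\ell)}\le t\,q^{m(t-1-\ell)}$ using that the largest of the $t$ summands occurs at $i=t-1$.

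The argument is essentially routine counting; the one place demanding care is the denominator, where one must recognize that $\rank_{\Fq}(\A)=t$ counts $\Fq$-independent rather than $\Fqm$-independent column tuples, so that the avoided subspaces shrink from size $q^{mi}$ to $q^{i}$. This discrepancy is precisely what separates $N_{\Fq}$ from $N_{\Fqm}$ and drives the final bound; conflating the two fields here is the most likely source of error.
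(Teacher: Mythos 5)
Your proposal is correct, and its first half follows the paper's own route exactly: both express the probability as a ratio of counts, both use the count $\prod_{i=0}^{t-1}(q^{m\ell}-q^{mi})$ of full-$\Fqm$-rank matrices, and both obtain the product bound by replacing the denominator with the trivial upper bound $q^{m\ell t}$ on the number of admissible matrices. (Your exact count $N_{\Fq}=\prod_{i=0}^{t-1}(q^{m\ell}-q^{i})$ of the conditioning set is a nice extra that the paper only alludes to in a footnote; neither proof actually needs it.) Where you genuinely diverge is the second inequality: the paper first weakens the product to $(1-q^{m(t-1-\ell)})^t$, expands by the binomial theorem, and argues that the alternating terms decrease in magnitude --- an argument that quietly invokes $t\le d-2$ and $d-2\le m-2$, conditions that hold in the paper's application but are not hypotheses of the lemma as stated. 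Your direct appeal to the Weierstrass product inequality $\prod_i(1-x_i)\ge 1-\sum_i x_i$ with $x_i=q^{m(i-\ell)}\in[0,1)$ (valid since $i\le t-1<\ell$) followed by bounding the sum by $t$ times its largest term is cleaner, fully self-contained, and proves the lemma exactly as stated. Both routes are valid; yours buys independence from the surrounding code parameters.
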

 \begin{IEEEproof}
In the proof of \cite[Lemma~3.13]{Overbeck_Diss_InterleveadGab}, it is shown that the number of $n \times m$ matrices of rank $t$ over $\Fq$ is given by
\begin{equation*}
\NMtnm = \prod_{i=0}^{t-1} \frac{(q^m-q^i)(q^n-q^i)}{q^t-q^i}
\end{equation*}
which implies that the number of full $\Fqm$-rank matrices $\A$ is $\prod_{i=0}^{t-1}(q^{m\ell}-q^{mi})$.
Thus, a lower bound on the probability that that $\rank_{\Fqm}(\A) = t$ is given by\footnote{The first inequality is relatively tight due to the following argument. Equation \eqref{eq:number_of_matrices_of_rank_t} implies that the number of full $\Fq$-rank matrices $\A$ is $\prod_{i=0}^{t-1}(q^{m\ell}-q^{i})$. Since $m\ell\gg t$, the ratio of all full $\Fq$-rank matrices in $\Fqm^{\ell \times t}$ to all matrices in $\Fqm^{\ell \times t}$ is very close to $1$.}
\begin{align*}
&\frac{|\{ \A \in \Fqm^{\ell \times t} \, : \, \rank_{\Fqm}(\A) =t \}|}{|\{ \A \in \Fqm^{\ell \times t} \, : \, \rank_{\Fq}(\A) = t \}|} \\
&\quad \quad \geq \frac{|\{ \A \in \Fqm^{\ell \times t} \, : \, \rank_{\Fqm}(\A) = t \}|}{q^{m \ell t}} \\
&\quad \quad =  \frac{\prod_{i=0}^{t-1}(q^{m\ell}-q^{mi})}{q^{m\ell t}} \\
&\quad \quad = \prod_{i=0}^{t-1}\frac{(q^{m\ell}-q^{mi})}{q^{m\ell t}} \\
&\quad \quad = \prod_{i=0}^{t-1}(1-q^{m(t-1-\ell)}) \\
&\quad \quad \geq (1-q^{m(t-1-\ell)})^t \\
&\quad \quad = \sum_{i=0}^{t} \binom{t}{i} (-q^{m(t-1-\ell)})^{i} \\
&\quad \quad = 1 - \binom{t}{1} q^{m(t-1-\ell)} +\binom{t}{2} q^{2m(t-1-\ell)} - \hdots \\
&\quad \quad \geq 1 - tq^{m(t-1-\ell)}.
  \end{align*}
  The latter inequality results from 
\begin{align*}
  \binom{t}{i} < \left(\frac{t e}{i}\right)^i \leq (t e)^i &\leq ((d-2)e)^i \leq ((m-2)e)^i \\
                                                           &< q^{im}\leq q^{-im(t-1-\ell)},
\end{align*}
for $q\geq 2$ and $i\geq 1$, where $e$ denotes Euler's number.
Thus,
\begin{equation*}
\binom{t}{i} q^{im(t-1-\ell)}  > \binom{t}{(i+1)} q^{(i+1)m(t-1-\ell)}.
\end{equation*}
 \end{IEEEproof}

 The lower bound on the success probability of the proposed algorithm is then given by the following theorem.
\begin{theorem}\label{thm:success_probability}
Let $\Code$ be a homogeneous $\ell$-interlaved code of minimum rank distance $d$, and $t \leq \min\{\ell, d-2\}$.
Furthermore, let
\begin{align*}
\R = \C + \E,
\end{align*}
where $\C$ is a codeword of $\Code$ and $\E$ is uniformly drawn from $\Fqm^{\ell \times n}$ with $\rank_{\Fq}(\E)=t$.
Then, with probability
\begin{equation*}
\geq  \prod_{i=0}^{t-1} (1-q^{m(i-\ell)}) \geq  1- tq^{m(t-1-\ell)},
  \end{equation*}
$\C$ can be uniquely reconstructed from $\R$ using Algorithm~\ref{alg:Decoding}.
\end{theorem}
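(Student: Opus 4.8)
The plan is to reduce the statement to a bound on the probability that the error satisfies the full-rank condition, and then to invoke Lemma~\ref{lem:lower_bound}. By Theorem~\ref{thm:main_statement}, Algorithm~\ref{alg:Decoding} recovers $\C$ whenever $t\leq\min\{\ell,d-2\}$ (which holds by assumption) and $\rank_{\Fqm}(\E)=t$. Hence the success probability is at least $\Pr[\rank_{\Fqm}(\E)=t]$, where $\E$ is drawn uniformly from all matrices in $\Fqm^{\ell\times n}$ of $\Fq$-rank $t$, and it suffices to lower-bound this quantity.

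First I would use the decomposition $\E=\A\B$ from Lemma~\ref{lem:E=AB}, with $\A\in\Fqm^{\ell\times t}$ and $\B\in\Fq^{t\times n}$ both of full $\Fq$-rank $t$. Since the rows of $\B$ lie in $\Fq^n$ and are $\Fq$-linearly independent, they are also $\Fqm$-linearly independent, so $\B$ has full $\Fqm$-row-rank $t$; multiplying by a full-row-rank matrix preserves the rank, giving $\rank_{\Fqm}(\E)=\rank_{\Fqm}(\A\B)=\rank_{\Fqm}(\A)$. Thus the full-rank condition on $\E$ is equivalent to $\rank_{\Fqm}(\A)=t$.

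Next I would translate the uniform distribution on $\E$ into a uniform distribution on $\A$. The idea is to partition the set $\{\E:\rank_{\Fq}(\E)=t\}$ according to the rank support $\suppR(\E)=\rspace{\extsmallfield(\E)}{\Fq}$, which is a $t$-dimensional $\Fq$-subspace of $\Fq^n$. For a fixed support with a fixed basis $\B$, the map $\A\mapsto\A\B$ is a bijection from the $\Fq$-rank-$t$ matrices in $\Fqm^{\ell\times t}$ onto the block of error matrices with that support (injectivity because $\B$ is right-invertible over $\Fqm$; surjectivity because any two bases of the same support differ by an invertible $\mathbf{T}\in\Fq^{t\times t}$), and by the preceding paragraph it carries exactly the full-$\Fqm$-rank $\A$ onto the full-$\Fqm$-rank $\E$ in that block. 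Consequently every block has the same size and the same conditional fraction of full-$\Fqm$-rank matrices, so $\Pr[\rank_{\Fqm}(\E)=t]$ equals the probability that a uniformly random $\Fq$-rank-$t$ matrix $\A\in\Fqm^{\ell\times t}$ has $\rank_{\Fqm}(\A)=t$. Applying Lemma~\ref{lem:lower_bound} (valid since $\ell\geq t$) then yields the claimed bound $\prod_{i=0}^{t-1}(1-q^{m(i-\ell)})\geq 1-tq^{m(t-1-\ell)}$.

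The main obstacle is precisely the distributional reduction in the third step: because the decomposition $\E=\A\B$ is unique only up to $\Fq$-row and column operations (Lemma~\ref{lem:E=AB}), one cannot simply declare $\A$ to be uniform. The clean way around this is to condition on the rank support and to fix one basis $\B$ per support, which turns $\A\mapsto\A\B$ into an honest bijection and restores uniformity block-by-block; the equidistribution across blocks then follows from the fact that all blocks have equal cardinality. Once this is set up, the remaining ingredients are the rank identity $\rank_{\Fqm}(\E)=\rank_{\Fqm}(\A)$ and a direct appeal to Lemma~\ref{lem:lower_bound}, both of which are routine.
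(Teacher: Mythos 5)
Your proposal is correct and follows essentially the same route as the paper's proof: reduce to the probability of the full-rank condition via Theorem~\ref{thm:main_statement}, use the decomposition $\E=\A\B$ of Lemma~\ref{lem:E=AB} to show $\rank_{\Fqm}(\E)=t$ iff $\rank_{\Fqm}(\A)=t$, condition on the rank support with a fixed basis $\B$ to obtain the bijection $\A\mapsto\A\B$, and apply Lemma~\ref{lem:lower_bound}. Your treatment of the distributional reduction (equal block sizes and equal conditional fractions) is in fact slightly more explicit than the paper's.
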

\begin{IEEEproof}
According to Theorem~\ref{thm:main_statement}, one can uniquely reconstruct $\C$ from $\R$ using Algorithm~\ref{alg:Decoding} if $\rank_{\Fqm}(\E)=t$.
Considering the decomposition of $\E$ shown in Lemma~\ref{lem:E=AB}, the error $\E$ has $\Fqm$-rank $t$ if and only if the matrix $\A \in \Fqm^{\ell \times t}$ has $\Fqm$-rank $t$.
Furthermore, for a given subspace $\mathcal{V} \subseteq \Fq^n$ of $\Fq$-dimension $t$ and a fixed basis $\B$ whose row space equals $\mathcal{V}$, there is a bijective mapping of the set of matrices $\A \in \Fqm^{\ell \times t}$ of $\Fqm$-rank $t$ and the set of errors $\E \in \Fqm^{\ell \times n}$ of rank support $\mathcal{V}$, given by $\A \mapsto \A \B$.
Hence, drawing $\E$ uniformly at random with rank weight $t$ is equivalent to drawing $\B \in \Fq^{t \times n}$ and $\A \in \Fqm^{\ell \times t}$, both of $\Fq$-rank $t$, uniformly at random.
Thus, the probability that $\E$ has $\Fqm$-rank $t$ is lower-bounded by
\begin{equation*}
\prod_{i=0}^{t-1} (1-q^{m(i-\ell)}) \geq 1-tq^{m(t-1-\ell)},
\end{equation*}
according to Lemma~\ref{lem:lower_bound}.
This implies that $\C$ can be uniquely reconstructed from $\R$ by Algorithm~\ref{alg:Decoding} with this probability.
\end{IEEEproof}

\subsection{Linear Rank-Metric Codes Correcting $d-2$ Errors Probabilistically}\label{ssec:linear_codes_correcting_d-2_errors}

A homogeneous $\ell$-interleaved code over the field $\Fqm$ with $\Fqm$-linear constituent code $\Code$ can be interpreted as an $\Fqmell$-linear rank-metric code by representing the columns of the codeword matrix, which are vectors in $\Fqm^\ell$, as elements of $\Fqmell$.
For interleaved Reed--Solomon codes in the Hamming metric, this was first proved in \cite{sidorenko2008decoding}, where one is even able to show that the resulting code is a punctured Reed--Solomon code over a large field.
The extension of this result to rank-metric codes is straightforward and, though to our knowledge unpublished, appears to be common knowledge in the community.
The following statement formally states the result for general linear rank-metric codes.

\begin{lemma}\label{lem:interleaved_code}
Let $\Code[n,k,d]$ be a rank-metric code over $\Fqm$. Then, the code $\Code'$ obtained by representing the codewords of a homogeneous $\ell$-interleaved code thereof as vectors over the large field $\mathbb{F}_{q^{m \ell}}$, i.e.,
\begin{align*}
\Code' := \left\{ \extsmallfield^{-1}\left( \begin{bmatrix}
\c_1 \\
\c_2 \\
\vdots \\
\c_\ell
\end{bmatrix} \right)  \, : \, \c_i \in \Code \right\} \subseteq \Fqmell^n,
\end{align*}
is an $\Fqmell$-linear $[n,k,d]$ rank-metric code over $\Fqmell$.
\end{lemma}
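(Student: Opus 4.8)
The plan is to verify the three defining properties of $\Code'$ in turn: $\Fqmell$-linearity together with $\Fqmell$-dimension $k$, length $n$ (which is immediate), and minimum rank distance $d$. Throughout I fix a basis $\boldsymbol{\beta}$ of $\Fqmell$ over $\Fqm$ and the basis $\boldsymbol{\gamma}$ of $\Fqm$ over $\Fq$, so that $\{\beta_i\gamma_j\}$ is a basis of $\Fqmell$ over $\Fq$ and the extension map factors through the tower $\Fq\subseteq\Fqm\subseteq\Fqmell$. Concretely, for $\v\in\Fqmell^n$ with $\C:=\extsmallfield_{\Fqmell/\Fqm}(\v)\in\Fqm^{\ell\times n}$ one obtains $\extsmallfield_{\Fqmell/\Fq}(\v)=\extsmallfield_{\Fqm/\Fq}(\C)$. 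Any two bases differ by an invertible $\Fq$-matrix, so this identity holds up to an $\Fq$-rank-preserving change of basis independently of the choice, and the $\Fq$-rank computations below are therefore basis-independent.

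For linearity and dimension, I would write an interleaved codeword as $\C=\M\G$ with $\M\in\Fqm^{\ell\times k}$ and $\G$ a generator matrix of $\Code$, and set $\vec{u}:=\extsmallfield^{-1}(\M)\in\Fqmell^k$ so that $\M=\extsmallfield_{\Fqmell/\Fqm}(\vec{u})$. Applying the property $\extsmallfield(\v\B^\top)=\extsmallfield(\v)\B^\top$ with the intermediate field $\Fqm$ playing the role of the base field and $\B^\top=\G$ (permissible since $\G$ has entries in $\Fqm$) gives $\extsmallfield_{\Fqmell/\Fqm}(\vec{u}\G)=\extsmallfield_{\Fqmell/\Fqm}(\vec{u})\G=\C$, hence $\extsmallfield^{-1}(\C)=\vec{u}\G$. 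This identifies $\Code'$ with the $\Fqmell$-row space of $\G$, i.e.\ $\Code'=\{\vec{u}\G:\vec{u}\in\Fqmell^k\}$, which is manifestly an $\Fqmell$-linear code of length $n$. Since $\G$ has full $\Fqm$-rank $k$ and the rank of a matrix does not change under field extension, its $\Fqmell$-rank is also $k$; thus its $k$ rows are $\Fqmell$-linearly independent and $\dim_{\Fqmell}\Code'=k$.

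For the minimum distance I would invoke the weight-preservation identity from the first paragraph: for $\v=\extsmallfield^{-1}(\C)$ the rank weight over $\Fqmell$ equals $\wtR(\v)=\rank_q(\extsmallfield_{\Fqm/\Fq}(\C))=\rank_{\Fq}(\C)$. Hence the minimum distance of $\Code'$ is $\min_{\C\in\IntCode{\ell;n,k,d}\setminus\{\0\}}\rank_{\Fq}(\C)$, and it remains to show this equals $d$. The upper bound follows by taking a minimum-weight codeword $\c\in\Code$ with $\rank_q(\c)=d$ and forming $\C$ with first row $\c$ and all other rows zero, giving $\rank_{\Fq}(\C)=d$. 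For the lower bound, any nonzero $\C$ has some nonzero row $\c_i\in\Code\setminus\{\0\}$, and $\suppR(\c_i)\subseteq\suppR(\C)$ yields $\rank_{\Fq}(\C)\geq\rank_q(\c_i)\geq d$. Combining the two bounds gives minimum rank distance exactly $d$.

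The main obstacle I anticipate is the careful bookkeeping across the tower $\Fq\subseteq\Fqm\subseteq\Fqmell$: one must check that the $\Fq$-matrix representation of a big-field vector coincides with the $\Fq$-expansion of the associated interleaved codeword matrix (so that rank weight over $\Fqmell$ equals $\rank_{\Fq}$ of the matrix), and that the subfield-multiplication property of $\extsmallfield$ still applies when the distinguished subfield is the intermediate field $\Fqm$ and the generator matrix $\G$ has entries there. Once these two compatibility statements are in place, linearity, dimension, and distance all follow routinely.
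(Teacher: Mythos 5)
Your proof is correct, and it takes a genuinely different route from the paper's. The paper establishes $\Fqmell$-linearity directly: closure under addition is immediate from the linearity of $\extsmallfield$, and closure under scalar multiplication is argued by realizing multiplication by $a\in\Fqmell$ as left-multiplication of the $\ell\times n$ codeword matrix by the $\ell\times\ell$ matrix representation of $a$ over $\Fqm$ --- a row operation that maps codewords of $\Code$ to codewords of $\Code$ --- and then dismisses the parameters with ``follow directly by counting.'' You instead identify $\Code'$ outright as the $\Fqmell$-row space of a generator matrix $\G$ of $\Code$, via the compatibility $\extsmallfield(\vec{u}\G)=\extsmallfield(\vec{u})\G$ over the intermediate field $\Fqm$. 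That identification buys you linearity and the dimension count in one stroke (the full $\Fqm$-rank of $\G$ is preserved under field extension), whereas the paper's row-operation argument avoids having to verify that $\extsmallfield$ commutes with right-multiplication by $\Fqm$-matrices. You also spell out the minimum-distance claim that the paper leaves implicit: weight preservation through the tower $\Fq\subseteq\Fqm\subseteq\Fqmell$ (so that the $\Fqmell$-rank weight of $\v$ equals $\rank_{\Fq}$ of the associated $\ell\times n$ matrix), the upper bound from a single-row codeword of weight $d$, and the lower bound from support containment of a nonzero row. Your explicit care with the basis compatibility of the two ext maps is precisely the bookkeeping the paper glosses over, and nothing in your argument is circular or missing.
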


\begin{proof}
Due to the linearity of the ext mapping, the sum of two codewords is again a codeword.
Furthermore, $\Fqmell$-multiples of a codeword result again in a codeword since multiplication by an $\Fqmell$-element can be seen as a multiplication with the corresponding matrix representation of the field element from the left to the interleaved codeword (seen as an $\ell \times n$ matrix over $\Fqm$).
Hence, only row operations are performed and the result is again a codeword since each row contains a linear combination of the previous rows, which are again codewords of $\Code$ due to the linearity of $\Code$.
The parameters follow directly by counting.
\end{proof}

Note that the overall extension degree $m\ell$ of the resulting code is typically much larger than the code length $n$.
On the other hand, for high interleaving order, the code is able to correct most error patterns up to $d-2$ instead of the usual $\tfrac{d-1}{2}$ errors.
We formalize this fact in the following statement, which implies that a random rank error of weight $t\leq d-2$ can be corrected with probability approaching $1$ and exponentially in the difference of $t$ and $\ell$.

\begin{theorem}
Let $\Code[n,k,d]$ be a linear rank-metric code over $\Fqm$ and $\Code'[n,k,d]$ be the corresponding linear code over $\Fqmell$ as constructed in Lemma~\ref{lem:interleaved_code}, where $\ell \in \mathbb{N}$. If $\ell \geq t$, we can can correct a fraction at least
\begin{align*}
\prod_{i=0}^{t-1} (1-q^{m(i-\ell)}) \geq  1- tq^{m(t-1-\ell)},
\end{align*}
of all rank errors $\e \in \Fqmell^n$ of rank weight $t$ with $\Code'$.
\end{theorem}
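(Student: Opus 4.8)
The plan is to reduce this statement to an assembly of three already-established results: Lemma~\ref{lem:interleaved_code}, which identifies $\Code'$ with the interleaved code; Theorem~\ref{thm:main_statement}, which gives the guaranteed correction condition for Algorithm~\ref{alg:Decoding}; and Lemma~\ref{lem:lower_bound}, which counts the matrices satisfying that condition. In effect, this theorem is Theorem~\ref{thm:success_probability} re-read from the viewpoint of the single $\Fqmell$-linear code $\Code'$ rather than the $\ell$-interleaved code, so the work is almost entirely in setting up the correspondence correctly.

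First I would make the identification between the two error pictures explicit. By Lemma~\ref{lem:interleaved_code}, $\Code'$ and $\IntCode{\ell;n,k,d}$ are the same object under the columnwise $\Fqm$-to-$\Fqmell$ representation. An error $\e \in \Fqmell^n$ therefore corresponds to a unique error matrix $\E \in \Fqm^{\ell \times n}$, and I would verify that the rank weight of $\e$ (the $\Fq$-rank of its $\Fq$-matrix representation in $\Fq^{m\ell \times n}$) equals $\rank_\Fq(\E)$, since both equal the $\Fq$-dimension of the common $\Fq$-column space. Consequently, decoding $\e$ in $\Code'$ is equivalent to decoding $\E$ in the interleaved code, and Algorithm~\ref{alg:Decoding} serves as a decoder for $\Code'$.

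Next I would invoke the correction guarantee. In the relevant regime $t \le d-2$, together with the hypothesis $\ell \ge t$, Theorem~\ref{thm:main_statement} shows that Algorithm~\ref{alg:Decoding} recovers the codeword whenever $\E$ additionally satisfies the full-rank condition $\rank_\Fqm(\E) = t$. Hence every weight-$t$ error whose matrix representation has full $\Fqm$-rank is correctable, and the correctable fraction is at least the fraction of $\Fq$-rank-$t$ matrices $\E \in \Fqm^{\ell \times n}$ for which $\rank_\Fqm(\E) = t$. Finally I would count this fraction exactly as in the proof of Theorem~\ref{thm:success_probability}: using the decomposition $\E = \A\B$ of Lemma~\ref{lem:E=AB} and the bijection $\A \mapsto \A\B$ for a fixed support basis $\B$, the count reduces to the fraction of $\A \in \Fqm^{\ell \times t}$ of $\Fq$-rank $t$ that also have $\Fqm$-rank $t$, which Lemma~\ref{lem:lower_bound} bounds below by $\prod_{i=0}^{t-1}(1-q^{m(i-\ell)}) \ge 1 - tq^{m(t-1-\ell)}$.

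The main obstacle is the first step: carefully justifying that rank weight is preserved under the $\Fqmell \leftrightarrow \Fqm^{\ell}$ identification and that it is measured over $\Fq$ in both pictures, so that the enumeration of Lemma~\ref{lem:lower_bound} genuinely matches the fraction of weight-$t$ vectors $\e$ over $\Fqmell$. Once that identification is pinned down, the remainder is a direct combination of the cited results, and no new estimates are required.
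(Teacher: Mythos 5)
Your proposal is correct and follows essentially the same route as the paper: the paper's proof also reduces the statement to the interleaved picture via the bijectivity of $\extsmallfield$ on weight-$t$ errors and then invokes Theorem~\ref{thm:success_probability} (which itself rests on Theorem~\ref{thm:main_statement} and Lemma~\ref{lem:lower_bound}, exactly the ingredients you unwind). The only difference is that you spell out the rank-weight preservation and the counting argument that the paper leaves implicit by citing Theorem~\ref{thm:success_probability} directly.
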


\begin{proof}
The statement directly follows by extending the received word into an $\ell \times n$ matrix over $\Fqm$, which is an interleaved codeword plus an error by Lemma~\ref{lem:interleaved_code}.
Furthermore, the mapping $\extsmallfield \, : \, \Fqmell^n \to \Fqm^{\ell \times n}$ is a bijective mapping between all errors $\e \in \Fqmell^n$ of rank weight $\rank_{\Fq}(\e)=t$ and interleaved error matrices of weight $t$.
Hence, we can correct the given fraction of errors using the interleaved decoder, Algorithm~\ref{alg:Decoding}, by Theorem~\ref{thm:success_probability}.
\end{proof}

\subsection{Correcting More Than $d-2$ Errors}\label{ssec:more_than_d-2}

In Theorem~\ref{thm:kernelB}, we showed that the rank support of an error $\E$ can be obtained from a received word $\C+\E$ if three conditions are fulfilled:
\begin{enumerate}[label=\roman*)]
\item $t \leq d-2$, \label{itm:t_<=_d-2}
\item $\ell \geq t$ (high-order condition), and
\item $\rank_{\Fqm} (\E) = t$ (full-rank condition).
\end{enumerate}
However, by the proof of Theorem~\ref{thm:kernelB}, it suffices to replace the first condition, \ref{itm:t_<=_d-2}, by
\begin{equation}
\rank_{\Fqm}\left(\H \left[\B^\top \mid \b^\top\right]\right) = t+1 \quad \forall \, \b \in \Fq^n \setminus \rspace{\B}{\Fq}, \label{eq:alternative_condition}
\end{equation}
where $\B$ is any basis of the rank support of $\E$.
Obviously, this condition is fulfilled for $t \leq d-2$ due to \cite[Theorem~1]{Gabidulin_TheoryOfCodes_1985}, but might as well be true for larger values of $t$, depending on the rank support.
Contrarily, by definition of the minimum distance, there must always be an error of weight $t > d-2$ for which \eqref{eq:alternative_condition} is not fulfilled.
Furthermore, it can only work up to $t+1 \leq n-k$ since $\H$ has only $n-k$ columns.

This condition is equivalent to saying that any error $\E'$ of weight $t+1$ whose rank support contains the one of $\E$, i.e., $\ranksupp{\E} \subseteq \ranksupp{\E'}$, can be uniquely reconstructed from $\ranksupp{\E'}$ using the erasure-correction strategy described in Lemma~\ref{lem:get_E_from_B}.

In simulations, we observed that decoding more than $d-2$ errors can succeed even with high probability. E.g., in case of a linear $[n=10,k=2,d=7]$ code over $\mathbb{F}_{2^{10}}$ and an error with $t =\rank_{\Fqm}(\E) = \rank_{\Fq}(\E) = n-k-1=7$, we obtained a success probability of more than $99\%$.

There is again an analogy to the Hamming metric: The Metzner--Kapturowski algorithm works if and only if the $t+1$ columns of the parity-check matrix corresponding to the error positions (support) and any one additional column have full rank.
Hence, any error consisting of the original $t$ error positions and one additional one must be uniquely reconstructable from the syndrome and known support in order for the algorithm to work.
This fact was recently utilized for decoding interleaved locally repairable codes (LRCs) (in the Hamming metric) far beyond their minimum distance \cite{holzbaur2019lrc}. A family of LRCs was derived, which fulfills this condition for the majority of $t$ error positions, where $t$ is greater than the minimum distance.

\subsection{Heterogeneous Interleaved Codes}\label{ssec:inhomogeneous_codes}

The fact that all constituent codes of the interleaved code must have the same parity-check matrix restricts the new decoder, Algorithm~\ref{alg:Decoding}, to homogeneous interleaved codes.
In the following, we show that it also works for certain heterogeneous
codes.
An heterogeneous interleaved code is defined by
\begin{align*}
\IC_\mathrm{het}[\ell; n,k_1,\dots,k_\ell] := \left\{ \C = \begin{bmatrix}
\c_1 \\ \c_2 \\ \vdots \\ \c_\ell
\end{bmatrix} \, : \, \c_i \in \Code_i[n,k_i,d_i] \right\} ,
\end{align*}
where $\Code[n,k_i,d_i]$ are linear rank-metric codes. Note that in contrast to Definition~\ref{def:interleaved_codes}, the constituent codes $\Code_i$ can be distinct and also have different parameters.

If all the constituent codes are contained in a joint supercode $\Code_i \subseteq \Code[n,k,d]$ (e.g., $\Code$ could be one of the constituent codes, which would imply $d = \min_i\{d_i\}$), then obviously we have
\begin{equation*}
\IC_\mathrm{het}[\ell; n,k_1,\dots,k_\ell] \subseteq \IC[\ell; n,k,d],
\end{equation*}
where $\IC$ is the homogeneous $\ell$-interleaved code of $\Code$.
Hence, we can correct up to $d-2$ errors in $\IC[\ell; n,k_1,\dots,k_\ell]$ using a parity-check matrix of $\Code$.

Consider again the general case.
Let $j \in \{1,\dots,\ell\}$ be such that $d_j = d := \min_i \{d_i\}$.
If for more than $d-1$ errors, the rank support of the error is found correctly, the $j$-th row of the error matrix $\E$ is not always uniquely determined by $\B$.
This corresponds to the erasure correction capability of the $j$-th constituent code.
In particular, the error is not unique if and only if the rank of $\H \B$ is not equal to the number of errors $t$.
If $\Code$ is an MRD code, there is never a unique solution.

In summary, there are heterogeneous rank-metric codes that correct up to
\begin{equation*}
\min_i\{d_i\}-2 \text{ errors}
\end{equation*}
under the high-order and full-rank condition. Contrarily, it is impossible to always uniquely reconstruct more than
\begin{equation*}
\min_i\{d_i\}-1 \text{ errors}
\end{equation*}
even if the rank support is known.
This shows that the new decoder is close to optimal (in this sense) for those heterogeneous codes whose constituent codes are subcodes of one of its constituent codes.

\subsection{Relation to Existing Decoders of Interleaved Codes}\label{ssec:relation_to_interleaved_gabidulin_decoders}

There are several known algorithms for decoding interleaved Gabidulin codes, i.e., the special case in which the constituent codes are Gabidulin codes.
The first one was proposed by Loidreau and Overbeck in \cite{loidreau2006decoding} (see also \cite{Overbeck_Diss_InterleveadGab}) and is based on solving a linear system of equations.
Later, Sidorenko and Bossert \cite{sidorenko2010decoding} proposed a syndrome key equation based algorithm for a slightly different error model (sometimes referred to as \emph{horizontally interleaved codes}).
It was shown in \cite{wachter2013decoding} that the Sidorenko--Bossert algorithm can be adapted to the error model assumed here.
Furthermore, there is an interpolation-based decoding algorithm \cite{wachter2014list}.

All three decoding algorithms\footnote{Alternatively, the Wachter-Zeh--Zeh algorithm can be seen as a list decoder with exponential worst-case, but small average-case list size.} can be seen as partial (also called probabilistic) unique decoding algorithms, which are able to correct errors up to
\begin{equation}
t \leq \tmax := \left\lfloor \frac{\ell}{\ell+1}(d-1) \right\rfloor = \left\lfloor \frac{\ell}{\ell+1}(n-k) \right\rfloor \label{eq:IGab_radii}
\end{equation}
errors, but fail for some errors beyond half the minimum distance. For all three algorithms, there exist sufficient success conditions based on the structure of the error, which in general differ from the full-rank condition of our algorithm. There are also lower bounds on the success probability for all three algorithms, which are derived from these conditions.

As for sufficiently large interleaving order $\ell \geq d-2$ (high-order condition), the decoding radii of the interleaved Gabidulin decoders, $\tmax = d-2$, coincide with our new decoder, it is an interesting question to study their relation.

The following theorem proves that all three known decoders succeed under the high-order and full-rank condition.
For high-order interleaved Gabidulin codes, this yields a much easier success condition on the error than the known ones.
Although this means that our decoder works as well as the existing decoders for interleaved Gabidulin codes, our decoder is more general as it can be used for arbitrary linear rank-metric codes.
Note that a similar result was recently obtained in \cite{elleuch2018interleaved} for the Metzner--Kapturowski algorithm in the Hamming metric and interleaved Reed--Solomon codes.

\begin{theorem}\label{thm:success_known_IGab_decoders}
Let $\C$ be a codeword of an $\ell$-interleaved Gabidulin code of parameters $[n,k,n-k+1]$ and $\E \in \Fqm^{\ell \times n}$ be an error matrix with $\rank_{\Fq}(\E) = t$ that fulfills $t\leq \ell$ (\textbf{high-order condition}) and $\rank_{\Fqm}(\E) = t$ (\textbf{full-rank condition}).

Then, $\C$ can be uniquely reconstructed from the received word $\R = \C+\E$ using the Loidreau--Overbeck \cite{loidreau2006decoding}, Sidorenko--Bossert \cite{sidorenko2010decoding}\footnote{In order to work with the error model assumed here, we must use the variant described in \cite[Section~4.1]{wachter2013decoding}}, or Wachter-Zeh--Zeh \cite{wachter2014list} decoder.
  \end{theorem}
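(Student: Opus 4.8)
The plan is to reduce the success of all three decoders to a single non-degeneracy condition on the error, namely that the coefficient matrix $\A$ in the decomposition $\E = \A\B$ of Lemma~\ref{lem:E=AB} has full $\Fqm$-rank $t$. First I would observe that this is precisely a restatement of the \textbf{full-rank condition}: since $\B \in \Fq^{t \times n}$ has full $\Fq$-rank $t$, it also has $\Fqm$-rank $t$, so that $\rank_{\Fqm}(\E) = \rank_{\Fqm}(\A\B) = \rank_{\Fqm}(\A)$. Hence $\rank_{\Fqm}(\E) = t$ if and only if $\rank_{\Fqm}(\A) = t$. I would also record that the \textbf{high-order condition} $t \le \ell$ together with $t \le d-2$ (equivalently $n-k \ge t+1$) implies $\tfrac{\ell}{\ell+1}(n-k) \ge t$, so $t \le \tmax$ with $\tmax$ as in \eqref{eq:IGab_radii}; thus each of the three decoders operates within its guaranteed radius, and it only remains to verify that each one's internal success condition is met.

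For the Loidreau--Overbeck decoder \cite{loidreau2006decoding,Overbeck_Diss_InterleveadGab} this is almost immediate. I would recall that this decoder forms a $q$-power-extended matrix from $\ve{g}$ and the rows of $\R$ and succeeds exactly when this matrix attains its generic rank, a condition which unwinds to $\rank_{\Fqm}(\A) = t$; indeed, this is the very event whose probability is estimated in \cite[Lemma~3.13]{Overbeck_Diss_InterleveadGab}, reproduced here as Lemma~\ref{lem:lower_bound}. By the first paragraph this event is equivalent to the full-rank condition, so the Loidreau--Overbeck decoder returns $\C$.

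For the Sidorenko--Bossert decoder \cite{sidorenko2010decoding} (in the reformulation of \cite[Section~4.1]{wachter2013decoding}) and the Wachter-Zeh--Zeh decoder \cite{wachter2014list}, I would pass through the common object underlying all Gabidulin-type decoders: the error-span polynomial $\LambdaOp$, i.e.\ the minimal monic $q$-linearized polynomial of $q$-degree $t$ vanishing on the column space of the error. The plan is to show that, under $\rank_{\Fqm}(\A) = t$, the key equation solved by Sidorenko--Bossert has $\LambdaOp$ as its unique (up to a scalar) minimal-degree solution, and that the interpolation step of Wachter-Zeh--Zeh returns a unique solution; in both cases full $\Fqm$-rank of $\A$ is exactly what excludes the spurious low-degree solutions that the decoders' generic success conditions are designed to rule out. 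Once $\LambdaOp$, and hence $\ranksupp{\E}$, is recovered, erasure correction as in Lemma~\ref{lem:get_E_from_B} (valid since $t \le d-2 = n-k-1$) reconstructs $\C$ uniquely, mirroring Theorem~\ref{thm:kernelB}.

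The main obstacle is that the three decoders are phrased in genuinely different languages --- a kernel of an extended linear system, a syndrome key equation, and an interpolation module --- so the work lies in translating each native success criterion into the single statement $\rank_{\Fqm}(\A) = t$. I expect the Wachter-Zeh--Zeh case to be the most delicate, since it is a priori a list decoder and one must additionally argue that the full-rank condition forces the output list to collapse to size one; the cleanest unification is to characterize the success of each decoder as correct recovery of $\LambdaOp$ and to show that $\LambdaOp$ is recovered precisely when $\A$ has full $\Fqm$-rank.
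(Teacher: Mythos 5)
Your high-level framing is right --- the full-rank condition is equivalent to $\rank_{\Fqm}(\A)=t$ in the decomposition $\E=\A\B$, and the task is to show that this single condition implies each decoder's native success criterion --- but the proposal stops exactly where the actual work begins. For the Loidreau--Overbeck decoder you assert that its success condition ``unwinds to $\rank_{\Fqm}(\A)=t$,'' citing the counting lemma \cite[Lemma~3.13]{Overbeck_Diss_InterleveadGab}; that lemma only counts matrices of a given rank and does not establish the implication. The decoder's criterion is that the stacked matrix built from $\g,\g^{[1]},\dots,\g^{[n-t-2]}$ and $\E,\E^{[1]},\dots,\E^{[n-k-t-1]}$ has $\Fqm$-rank $n-1$, and deducing this from the full-rank condition is the entire technical content of the paper's proof. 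It requires two nontrivial steps that are absent from your proposal: (i) showing that the block $\Z$ of $q$-powers of $\E$ has $\Fqm$-rank exactly $t$ --- the lower bound uses $\rank_{\Fqm}(\E)=t$, but the upper bound needs a separate argument via an $\Fq$-column reduction $\E\P=[\tilde{\E}\mid\0]$ that commutes with the Frobenius powers; and (ii) showing that the rows of the Moore block and the rows of $\Z$ are jointly independent, which the paper does by a rank-weight separation argument (every nonzero combination of the Moore rows has rank weight $\ge t+2$ because they generate an $[n,n-t-1,t+2]$ Gabidulin code, while every combination of rows of $\Z$ has rank weight $\le t$). Without these steps the claimed equivalence is unsupported.

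For the other two decoders the paper does not analyze the key equation or the interpolation module at all: it invokes the known implications (from \cite[Lemmas~4.1 and~4.8]{wachter2013decoding}) that success of the Loidreau--Overbeck decoder implies success of the Sidorenko--Bossert and Wachter-Zeh--Zeh decoders, so everything reduces to the single rank computation above. Your plan of characterizing each decoder's success via recovery of the error-span polynomial $\LambdaOp$ and showing that $\rank_{\Fqm}(\A)=t$ excludes spurious low-degree solutions is a plausible alternative route, but it is considerably more laborious, and you explicitly leave it as a plan (including the delicate list-collapse argument for the Wachter-Zeh--Zeh decoder) rather than carrying it out. As written, the proposal is an outline with the two decisive steps --- the rank-$(n-1)$ computation and the cross-decoder reduction --- missing.
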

  \begin{IEEEproof}
	We need to show that all three decoders succeed under the given conditions.
	It suffices to prove that the Loidreau--Overbeck decoder does not fail since this implies the success of the other two algorithms (cf.~\cite[Lemma~4.1]{wachter2013decoding} and \cite[Lemma~4.8]{wachter2013decoding}).

    The algorithm in~\cite{loidreau2006decoding} reconstructs $\C$ from $\R$ uniquely if and only if (recall that $[i] := q^i$)
    \begin{equation}
     \rank_{\Fqm} \left(
 \begin{mymatrix}
        \g\\
        \g^{[1]}\\
        \vdots \\
        \g^{[n-t-2]}\\
        \E \\
        \E^{[1]} \\
        \vdots \\
        \E^{[n-k-t-1]} \\
      \end{mymatrix}       \right)
    = n-1.
    \label{eq:condition-loid}
      \end{equation}
      In the following, we show that the submatrix
      \begin{equation*}
        \tilde{\G} := 
 \begin{mymatrix}
        \g\\
        \g^{[1]}\\
        \vdots \\
        \g^{[n-t-2]}\\
      \end{mymatrix}
    \end{equation*}
has $\Fqm$-rank $n-t-1$ and that the submatrix
      \begin{equation*}
        \Z := 
       \begin{mymatrix}
        \E \\
        \E^{[1]} \\
        \vdots \\
        \E^{[n-k-t-1]} \\
      \end{mymatrix}
    \end{equation*}
    has $\Fqm$-rank $t$. Further, we prove that the rows of $\tilde{\G}$ and the rows of $\Z$ are linearly independent such that their ranks add up to $n-1$ and thus equation~(\ref{eq:condition-loid}) is fulfilled.

 The matrix $\tilde{\G}$ is a generator matrix of an $[n,n-t-1,t+2]$ Gabidulin code. Thus, $\tilde{\G}$ has $\Fqm$-rank $n-t-1$ and each $\Fqm$-linear combination of $\g,\hdots,\g^{[n-t-2]}$ has rank weight of at least $t+2$.

 Since $\rank_{\Fqm}(\E) = t$, it follows that $\rank_{\Fqm}(\Z) \geq t$. Further since $\rank_{\Fq}(\E)=t$, there exists a full-rank matrix $\P \in \Fq^{n \times n}$ such that
    \begin{equation*}
      \E\P = \begin{mymatrix}
\tilde{\E} & \0_{\ell,n-t}
        \end{mymatrix},
      \end{equation*}
      where $\tilde{\E} \in \Fqm^{\ell \times t}$. Thus,
      \begin{align*}
        \rank_{\Fqm} (\Z) &= \rank_{\Fqm} (\Z \P) \\
                          &=
        \rank_{\Fqm} \left(
       \begin{mymatrix}
        \tilde{\E} &\0_{\ell,n-t} \\
        \tilde{\E}^{[1]}&\0_{\ell,n-t} \\
        \vdots & \vdots\\
        \tilde{\E}^{[n-k-t-1]}&\0_{\ell,n-t} \\
      \end{mymatrix} \right) \\
        & \leq t,
          \end{align*}
since $\tilde{\E}$ has $t$ rows. This implies that $\rank_{\Fqm}(\Z) =t$ and each $\Fqm$-linear combination has rank weight of at most $t$.

  Since each $\Fqm$-linear combination of the rows of $\tilde{\G}$ has rank weight at least $t+2$ and each linear $\Fqm$-linear combination of the rows of $\Z$ has rank weight at most $t$, the rows of $\tilde{\G}$ and the rows of $\Z$ must be linearly independent.
    \end{IEEEproof}

\begin{remark}
The fastest variants of the known decoders for interleaved Gabidulin codes, \cite{sidorenko2011skew,sidorenko2014fast,puchinger2017row,puchinger2017alekhnovich}, have complexities $O(\ell n^2)$ over $\Fqm$ (using \cite{sidorenko2014fast} or \cite{puchinger2017row}), $O^\sim(\ell^3 n^{1.635})$ over $\Fqm$ (using \cite{sidorenko2011skew} or \cite{puchinger2017alekhnovich} with the fast linearized polynomial operations of \cite{puchinger2018fast}), and $O(\ell^3 n^{\omega-2} m^2)$ over $\Fq$ (using \cite{sidorenko2011skew} or \cite{puchinger2017alekhnovich} with the fast linearized polynomial operations of \cite{caruso2017fast}), where $2 \leq \omega \leq 3$ is the matrix multiplication exponent.

Hence, for high-order interleaved Gabidulin codes ($\ell \geq t$) and a reasonably large number of errors $t \in \Omega(n)$, the existing algorithms are not asymptotically faster than the new decoder.
This is surprising since the existing algorithms rely on the heavy structure of Gabidulin codes, whereas the new decoder is generic.
\end{remark}

\section{Conclusion}\label{sec:conclusion}

\subsection{Summary}

We have presented a new decoding algorithm for homogeneous interleaved codes of minimum rank distance $d$, which corrects up to $d-2$ errors given that the error matrix fulfills two conditions: the number of errors is at most the interleaving order (high-order condition) and the rank of the error, over the large field, equals the number of errors (full-rank condition). The new decoder is an adaption of Metzner and Kapturowski's decoder in the Hamming metric, works for any rank-metric code, consists only of linear-algebraic operations, and can be implemented with asymptotically
\begin{equation*}
O^\sim\!\left( \max\{n^2\ell,n^3\} m \right)
\end{equation*}
operations in the small field $\Fq$.

We have shown that the success probability for a random error of a given weight can be arbitrarily close to $1$ for growing interleaving order.
By viewing a homogeneous interleaved code as a linear code over a large extension field, we have observed that one obtains a linear rank-metric code (over a large field) correcting almost any error of rank weight up to $d-2$.
We have derived sufficient conditions on the rank support of the error for which the proposed algorithm is able to correct more than $d-2$ errors and presented a way of adapting it to certain heterogeneous codes.
In the special case of interleaved Gabidulin codes, we have proven that the known decoding algorithms succeed under the same conditions as the new decoder, which simplifies the existing success conditions.

\subsection{Open Problems}
The algorithm proposed in this paper is designed for interleaved rank-metric codes in which the constituent codewords and error vectors are aligned vertically (also called vertically interleaved codes). This error model was considered in the majority of works on interleaved rank-metric codes, including \cite{loidreau2006decoding,wachter2014list,wachter2013decoding}.
There is an alternative scheme, proposed in \cite{sidorenko2010decoding}, where where the constituent codewords are aligned horizontally.
In this case, the constituent error matrices' column spaces are all contained in a relatively small joint column space of dimension $t$.
An open problem for future research is the adaption of the new algorithm to this model.

In \cite{haslach2000efficient}, Haslach and Vinck generalized the algorithm of Metzner and Kapturowski to linearly dependent (Hamming-)error matrices. An interesting open question is whether our algorithm can also be generalized to such errors in the rank metric.

A further open problem relates to McEliece-like \cite{mceliece1978public} code-based cryptosystems in the rank metric (e.g., \cite{gabidulin1991ideals,gaborit2013low,loidreau2016evolution}), which are based on the hardness of decoding in a generic rank-metric code.
To encrypt a secret message, the message is mapped to a codeword of a code and the codeword is corrupted by an artificial error. An authorized user knows secret properties about the code that enables him or her to efficiently decode the corrupted codeword and thus retrieve the secret message. To an unauthorized user, the encrypted message looks like a corrupted codeword of a random code and is thus very complex to decode.
It might occur that, in different encryption rounds, two or more error matrices' row spaces are contained in a low-dimensional joint row space.
This could, e.g., occur by chance, a bad implementation of the cryptosystem, or---in a controlled way---on purpose (consider a rank-metric variant of \cite{elleuch2018interleaved,holzbaur2019decoding}).
Since the vertical alignment of the received words would result in an interleaved codeword of a seemingly random constituent code and a low-weight error matrix, the security level of the system might be affected by the new decoder (recall that it works for arbitrary codes).
However, the analysis is not straight-forward since one needs to study how the output of the new algorithm can be used to decrease the search space of a generic decoder (i.e., for realistic, low, interleaving orders, where the algorithm does not directly succeed).
We therefore leave this analysis as an open problem.

\section*{Acknowledgment}

We would like to thank Vladimir Sidorenko for making us aware of the Metzner--Kapturowski algorithm.

\bibliographystyle{IEEEtran}
\bibliography{main}

\end{document}